\newtheorem{definition}{Definition}[section]
\newtheorem{lemma}[definition]{Lemma}
\newtheorem{theorem}[definition]{Theorem}
\numberwithin{equation}{section}
\def\bR{\mathbb{R}}
\def\bP{\mathbb{P}}
\def\bE{\mathbb{E}}
\def\cN{\mathcal{N}}
\def\cU{\mathcal{U}}
\def\cW{\mathcal{W}}
\def\cF{\mathcal{F}}
\def\cM{\mathcal{M}}
\def\cJ{\mathcal{J}}
\def\cG{\mathcal{G}}
\def\cL{\mathcal{L}}
\def\b1{\mathds{1}}
\def\Re{\mathrm{Re}}
\def\Im{\mathrm{Im}}
\def\wO{\widetilde{O}}
\def\ph{\varphi}
\def\wt{\widetilde}
\newcommand{\vertiii}[1]{{\left\vert\kern-0.25ex\left\vert\kern-0.25ex\left\vert #1 
    \right\vert\kern-0.25ex\right\vert\kern-0.25ex\right\vert}}
\begin{document}
\title[Large deviation estimates for weakly interacting bosons]{Large deviation estimates for weakly interacting bosons}

\author{Simone Rademacher and Robert Seiringer}
\address{IST Austria, Am Campus 1, 3400 Klosterneuburg, Austria}

\date{\today}

\begin{abstract}
We study the many-body dynamics of an initially factorized bosonic wave function in the mean-field regime. We prove large deviation estimates for the fluctuations around the condensate. We derive an upper bound extending a recent result to more general interactions. Furthermore, we derive a new lower bound which agrees with the upper bound in leading order. 
\end{abstract}

\maketitle

\section{Introduction and main results}

\subsection{Introduction} We consider the dynamics of $N$ bosons in the mean-field regime described through the bosonic wave function $\psi_{N,t} \in L_{\rm s}^2 ( \bR^{3N})$, the symmetric subspace of $L^2 ( \bR^{3N} )$. The bosons evolve according to the Schr\"odinger equation 
\begin{align}
\label{eq:Schroe}
i \partial_t \psi_{N,t} = H_N \psi_{N,t} \; 
\end{align}
 where $H_N$ denotes the Hamiltonian 
\begin{align}
H_N = \sum_{j=1}^N - \Delta_{x_j} + \frac{1}{N} \sum_{i<j}^N v( x_i - x_j) \,.
\end{align}
The coupling constant $1/N$ in front of the interaction term corresponds to weak and long-range interactions of mean-field type. 
In the following we assume  the two-particle interaction potential $v$ to satisfy
\begin{align}
\label{eq:ass-v}
v^2 \leq C \left( 1 -\Delta\right)
\end{align}
 for a positive constant $C>0$.  We consider factorized initial data $\psi_{N,0} = \ph^{\otimes N}$ exhibiting complete Bose--Einstein condensation (BEC), i.e.  their reduced one-particle density $\gamma_N$ satisfies 
 \begin{align}
 \gamma_N =  \vert \varphi \rangle \langle \varphi \vert \quad \text{for every $N$,}
 \end{align}
 for a one-particle orbital $\varphi \in H^4 (\bR^3)$.  Although the factorization is not preserved along the time evolution, the property of BEC is known to be preserved, i.e. the reduced one-particle density $\gamma_{N,t}$ associated to the solution $\psi_{N,t}$ of the Schr\"odinger equation \eqref{eq:Schroe} satisfies 
 \begin{align}
 \label{eq:BEC}
 \gamma_{N,t} \rightarrow \vert \varphi_t \rangle \langle \varphi_t \vert \quad \text{as} \quad N \rightarrow \infty 
 \end{align}
where the time evolution of the condensate wave function  $\varphi_t$ is governed by the Hartree equation 
\begin{align}
\label{eq:hartree}
i \partial_t \ph_t = h_{\rm H}(t) \; \ph_t, \quad \text{with} \quad h_{\rm H} (t) = - \Delta +  v*\vert \ph_t \vert^2
\end{align}
 with initial data $\varphi_{0} = \varphi$. (For more details see e.g. \cite{AGT,AFP,BGM,CLS,C,EY,FKP,FKS,GV,KP,RoS,Sp}.)

\subsection{Main results} From a probabilistic point of view,  BEC implies a law of large numbers for bounded one-particle observables. To be more precise,  for a bounded, self-adjoint one-particle operator $O$ on $L^2 ( \mathbb{R}^3)$ we define the $N$-particle operator 
\begin{align}
\label{eq:def-Oi}
O^{(j)} = \mathds{1} \otimes \cdots \otimes \mathds{1} \otimes O \otimes \mathds{1} \otimes \cdots \otimes \mathds{1}
\end{align}
 as the operator acting as $O$ on  the $j$-th particle and as identity elsewhere. We consider $O^{(j)}$ as a random variable with probability distribution determined by $\psi_N $ and given through
\begin{align}
\mathbb{P}_{\psi_{N}} \left[ O^{(j)} \in A \right] = \langle \psi_{N}, \chi_A \left( O^{(j)} \right) \psi_N \rangle
\end{align}
where $\chi_A$ denotes the characteristic function of the set $A \subset \mathbb{R}$.  Since the expectation value with respect to factorized states $\psi_N = \ph^{\otimes N}$ is 
\begin{align}
\bE_{\varphi^{\otimes N}} \left[ O^{(j)} \right]  = \langle \ph, \;  O \ph \rangle  \quad \text{for all} \quad j = 1, \dots ,N , 
\end{align}
the random variables are i.i.d.  and thus,  in this case, they satisfy a law of large numbers, i.e.  for the averaged sum
$O_N = N^{-1} \sum_{j=1}^N \left( O^{(j)} - \langle \varphi, O \varphi \rangle \right) $, 
we have for any $\delta >0$
\begin{align}
\bP_{\varphi^{\otimes N}} \left[ \; \vert O_N  \vert   > \delta  \right] \rightarrow 0 \quad \text{as} \quad N \rightarrow \infty \; .
\end{align}
The large deviation principle goes one step further and investigates the rate of convergence through the rate function given by 
\begin{align}
\label{eq:rate function}
  \Lambda^*_{\psi_{N}} (x)  := - \lim_{N \rightarrow \infty} N^{-1} \log  \bP_{\psi_N} \left[  O_N  > x \right] \; ,
\end{align}
assuming the limit exists.   For i.i.d. random variables,  i.e. $\psi_N=\varphi^{\otimes N}$, Cram\'er's Theorem \cite{Cramer} shows that 
the rate function is given by
\begin{align}
\label{eq:rate-lfto}
\Lambda_{\varphi^{\otimes N}}^* ( x) = \inf_{\lambda \in \bR} \left[ - \lambda x + \Lambda_{\varphi^{\otimes N}} (\lambda) \right] 
\end{align}
where the rate function's Legendre-Fenchel transform $\Lambda_{\varphi^{\otimes N}}$ is the logarithmic moment generating function 
\begin{align}
\label{eq:rate-lft}
\Lambda_{\varphi^{\otimes N}} ( \lambda) =  \log \langle \ph, \; e^{\lambda  \left( O^{(1)} - \langle \varphi,  O \varphi\rangle \right)  }\ph \rangle . 
\end{align}

Recall that we consider the time evolution of factorized initial data with respect to \eqref{eq:Schroe}. Thus, initially the random variables are i.i.d. and therefore a law of large numbers and a large deviation principle with rate function \eqref{eq:rate-lft} hold true.  Although for times $ t >0$ the random variables are not i.i.d.\  anymore (as the factorization is not preserved),  the condensation property  \eqref{eq:BEC} ensures the validity of a law of large numbers \cite{BKS}, i.e.  for any $\delta>0$
\begin{align}
\bP_{\psi_{N,t}} \left[ \vert O_N \vert > \delta \right] \rightarrow 0 \quad \text{as} \quad N \rightarrow \infty \; .
\end{align}
In the following theorem, we show that for $t >0$  large deviation estimates hold true as well. 

Before stating our main theorem, let us introduce some notation. For $O$ a bounded self-adjoint operator on $L^2(\bR^3)$, we define the norm
\begin{align}
\label{def:norm-O}
\vertiii{O} = \| \left( - \Delta + 1 \right) O \left(  - \Delta + 1 \right)^{-1} \| 
\end{align} 
where $\|\cdot\|$ denotes the usual operator norm.  Moreover,  for $0\leq s \leq t$, let $f_{s;t} \in L^2 ( \bR^3 )$  denote the solution to
\begin{align}
\label{eq:def-fst}
i\partial_s f_{s;t} = \left( h_{\rm{H}}(s) + \widetilde{K}_{1,s} - \widetilde{K}_{2,s}  J  \right) f_{s;t}
\end{align}
with initial datum $f_{t;t} = q_t O  \ph_t = O \ph_t -\langle \ph_t, \;  O \ph_t \rangle \ph_t$, where $q_s = 1- \vert \varphi_s \rangle \langle \varphi_s \vert$, 
 $J$ denotes the anti-linear operator $Jf = \overline{f}$, the Hartree Hamiltonian $h_{\rm H}(s)$ is defined in \eqref{eq:hartree},  
and 
\begin{align}
\label{def:K}
\widetilde{K}_{1,s}= q_s  K_{1,s} \; q_s, \quad \widetilde{K}_{2,s} = q_s  K_{2,s} \; q_s 
\end{align}
with $K_{j,s}$ the operators defined by the integral kernels
\begin{align}
\label{def:K-ohne}
K_{1,s} (x,y) = v(x-y) \ph_s (x) \overline{\ph_s (y)}, \quad K_{2,s} (x,y) = v(x-y)\ph_s (x) \ph_s (y)\; .
\end{align}

\begin{theorem}
\label{thm:cramer} Assume that the interaction potential $v$ satisfies \eqref{eq:ass-v} and $\ph \in H^4 ( \bR^3 )$ with $\| \ph \|_{2} =1$.  For $t >0$, let $\psi_{N,t}$ denote the solution of the Schr\"odinger equation \eqref{eq:Schroe} with initial datum $\psi_{N,0} = \ph^{\otimes N}$ and $\varphi_t$ the solution to the Hartree equation \eqref{eq:hartree} with  $\varphi_0 = \varphi$. 

Let $O$ be a self-adjoint operator on $L^2 \left( \bR^3 \right) $ such that
$\vertiii{O}  < \infty$, and let $f_{s;t}$ be as defined above.
With $O^{(j)}$ from \eqref{eq:def-Oi}, we define  $O_{N,t} = N^{-1} \sum_{j=1}^N \left( O^{(j)} - \langle \varphi_t, O \varphi_t \rangle \right)$. 
There exist $C_1,C_2>0$ (independent of $O$) such that 
\begin{itemize}
\item[\rm{(i)}] for all  $t \geq 0$ and  $0\leq x   \leq  e^{- e^{C_1  t}} \| f_{0;t}\|_2^2/ \vertiii{O} $  
\begin{align}
\label{eq:upper-Cramer}
\limsup_{N \rightarrow \infty } N^{-1}\log  \mathbb{P}_{\psi_{N,t}} \left[ O_{N,t} > x\right] \leq -  \frac{x^2}{2 \| f_{0;t}\|_2^2}  + x^3 \frac{C_1 e^{e^{C_1  t}}\vertiii{O}^3}{\| f_{0;t}\|_2^6} . 
\end{align} 

\item[\rm{(ii)}] for all $t \geq 0$ and 
$0\leq  x \leq e^{-e^{ C_2 t} } \| f_{0;t}\|_{2}^4 / ( C_2 \vertiii{O}^3)$ 
\begin{align}
\label{eq:lower-Cramer}
\liminf_{N\rightarrow \infty} N^{-1} \log  \bP_{\psi_N}  \left[ O_{N,t} > x \right]  
\geq-  \frac{x^2}{2 \| f_{0;t}\|_2^2}  -  x^{5/2} \frac{C_2 e^{e^{C_2 t}}  \vertiii{O}^{3/2}}{\| f_{0;t}\|_{2}^4}   .
\end{align} 
\end{itemize}
\end{theorem}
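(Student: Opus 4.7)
The plan is to reduce both bounds to a uniform analysis of the cumulant generating function
\[
\Lambda_{N,t}(\lambda) := N^{-1} \log \bE_{\psi_{N,t}}\!\left[e^{\lambda N O_{N,t}}\right]
\]
for $\lambda$ in an $N$-independent window, and then to optimise in $\lambda$. For (i) I would combine the exponential Chernoff inequality $\bP_{\psi_{N,t}}[O_{N,t}>x] \le \exp(-\lambda N x + N\Lambda_{N,t}(\lambda))$ with the choice $\lambda \approx x/\|f_{0;t}\|_2^2$. For (ii) I would use a tilted-measure lower bound,
\[
\bP_{\psi_{N,t}}[O_{N,t}>x] \ge e^{-\lambda N(x+\delta)}\, \bE_{\psi_{N,t}}\!\left[e^{\lambda N O_{N,t}}\mathds{1}_{\{x<O_{N,t}<x+\delta\}}\right],
\]
remove the indicator via $\mathds{1}_{A} = 1 - \mathds{1}_{A^{c}}$ together with a Cauchy--Schwarz estimate against $\bE[e^{2\lambda N O_{N,t}}]$, and control the variance of $O_{N,t}$ under the tilted state by Chebyshev; the sharper truncation this requires is responsible for the weaker $x^{5/2}$ correction in (ii) compared with the $x^{3}$ correction in (i).

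The central step is the asymptotic evaluation of $\Lambda_{N,t}(\lambda)$. Writing $NO_{N,t} = d\Gamma(O) - N\langle\ph_t,O\ph_t\rangle$ and conjugating with the Weyl operator $W_{N,t} := W(\sqrt{N}\ph_t)$ that shifts away the condensate, the excitation part takes the form
\[
W_{N,t}^{*}\bigl(d\Gamma(O) - N\langle \ph_t, O\ph_t\rangle\bigr) W_{N,t} = \sqrt{N}\,\phi(q_t O \ph_t) + d\Gamma(q_t O q_t) - \langle\ph_t,O\ph_t\rangle\,\cN,
\]
up to pieces proportional to $a(\ph_t)$ which vanish on the excitation subspace. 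Simultaneously, $W_{N,t}^{*}\psi_{N,t}$ converges in norm, at rate $N^{-1/2}$, to $\cU_{2}(t;0)\Omega$, where $\cU_{2}(t;s)$ is the Bogoliubov fluctuation dynamics generated by the quadratic Hamiltonian with one-particle block $h_{\mathrm{H}}(s)+\widetilde K_{1,s}$ and pairing block $\widetilde K_{2,s}$. Equation~\eqref{eq:def-fst} is precisely the single-function form of the Heisenberg evolution of a test function under $\cU_{2}$, so
\[
\cU_{2}(t;0)^{*}\,\phi(q_{t} O\ph_{t})\,\cU_{2}(t;0) = \phi(f_{0;t}).
\]
The Gaussian identity $\langle\Omega, e^{\alpha\phi(g)}\Omega\rangle = e^{\alpha^{2}\|g\|_{2}^{2}/2}$ then yields the leading behaviour $\Lambda_{N,t}(\lambda) \to \tfrac{1}{2} \lambda^{2} \|f_{0;t}\|_{2}^{2}$, and optimisation in $\lambda$ produces the quadratic rate $x^{2}/(2\|f_{0;t}\|_{2}^{2})$ appearing in both parts of the theorem.

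The main obstacle is quantifying the remainder uniformly in $\lambda$ throughout the whole window $\lambda \lesssim x/\|f_{0;t}\|_{2}^{2}$, since every error sits inside an exponential of a field operator and therefore cannot afford even polynomial losses in $\lambda$. I would rewrite the moment generating function as the squared norm $\|e^{\lambda N O_{N,t}/2}\psi_{N,t}\|^{2}$ (legitimate because $O$ is bounded), compare it with the corresponding object built from $\cU_{2}(t;0)\Omega$, and close a Gr\"onwall estimate in the time parameter. The surviving error is controlled via moments of the number operator $\cN$ against tilted Bogoliubov states of the form $e^{\lambda\sqrt{N}\phi(f_{s;t})/2}\,\cU_{2}(s;0)\Omega$: exponential $\cN$-moments on $\cU_{2}(s;0)\Omega$ itself are available because that state is quasi-free, and they transfer to the tilted state by the Baker--Campbell--Hausdorff identity for Weyl operators. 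The hypotheses $\vertiii{O}<\infty$ and $\ph\in H^{4}(\bR^{3})$ enter here to control $d\Gamma(q_{s}Oq_{s})$ and commutators of $O$ with $\widetilde K_{j,s}$ in terms of the kinetic energy per excitation, which propagates along the Hartree equation; the resulting exponential-in-$t$ Gr\"onwall bounds produce the double-exponential factors $e^{e^{Ct}}$ that restrict the allowed range of $x$.
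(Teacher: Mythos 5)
Your reduction of Theorem~\ref{thm:cramer} to the log-MGF estimates of Theorem~\ref{thm:main} is the paper's strategy: Chernoff for (i), exponential tilting for (ii). There is, however, a genuine gap in how you propose to finish the lower bound (ii). The Cauchy--Schwarz step $\bE_{\psi_{N,t}}\bigl[e^{\lambda N O_{N,t}}\mathds{1}_{A^c}\bigr]\le\bE_{\psi_{N,t}}\bigl[e^{2\lambda N O_{N,t}}\bigr]^{1/2}\bP_{\psi_{N,t}}[A^c]^{1/2}$ does not close: for $x>0$ the probability $\bP_{\psi_{N,t}}[O_{N,t}\le x]$ tends to $1$ rather than to $0$, while $\bE[e^{2\lambda N O_{N,t}}]^{1/2}\sim e^{N\lambda^2\|f_{0;t}\|_2^2}$ already dominates the leading term $\bE[e^{\lambda N O_{N,t}}]\sim e^{N\lambda^2\|f_{0;t}\|_2^2/2}$. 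The smallness must be measured under the tilted law $\widetilde{\bP}_{\psi_{N,t}}[\cdot]\propto\bE_{\psi_{N,t}}[e^{\lambda N O_{N,t}}\mathds{1}_{\cdot}]$. Similarly, ``control the variance under the tilted state by Chebyshev'' is not available from the stated hypotheses: the tilted variance is $\Lambda_{N,t}''(\lambda)/N$, and Theorem~\ref{thm:main} provides two-sided bounds on $\Lambda_{N,t}(\lambda)$ but says nothing about its second derivative, so a literal second-moment Chebyshev has nothing to bite on.

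The mechanism the paper uses, and the one that actually closes the argument, is exponential Markov on the tilted law: $\widetilde{\bP}_{\psi_{N,t}}[O_{N,t}\le x]\le e^{N(\widetilde\mu x-\Lambda_{N,t}(\lambda)+\Lambda_{N,t}(\lambda-\widetilde\mu))}$ and analogously for the upper tail, which requires the \emph{upper} bound of Theorem~\ref{thm:main} at the shifted arguments $\lambda-\widetilde\mu$ and $\lambda+\mu$ together with the \emph{lower} bound at $\lambda$ itself -- this is precisely why both (i') and (ii') are needed. The choices $\lambda=x(1+\delta)/\|f_{0;t}\|_2^2$, $\varepsilon=x\widetilde\delta$, $\delta\sim\widetilde\delta\sim\sqrt{x\,\vertiii{O}^3/\|f_{0;t}\|_2^4}$ then produce the $x^{5/2}$ error. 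Regarding the supporting Theorem~\ref{thm:main}, your coherent Weyl-shift route is a viable alternative to the excitation map $\cU_t$ onto $\cF_{\varphi_t}^{\le N}$, but the decisive technical point -- which you flag without resolving -- is that the Gr\"onwall estimate must be run on the tilted-and-evolved vector $e^{-\lambda\kappa_s\cN_+(s)/2}e^{\lambda\sqrt N\phi_+(f_{s;t})/2}\cW_N(s;0)\Omega$; norm-approximating $\cW_N(t;0)\Omega$ by the quasi-free state $\cU_2(t;0)\Omega$ first and tilting afterwards cannot work, since the $O(N^{-1/2})$ norm error is fatal once it meets $e^{\lambda\sqrt N\phi_+}$.
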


We remark that the function $f_{s;t}$ is determined through Bogoliubov's quasi-free approximation of the fluctuations around the condensate (see \eqref{eq:Winfty} below).  For a detailed explanation see \cite[Theorem 2.2 and subsequent Remark]{BKS}. In fact, with the notation of \cite{BKS}, $f_{s;t} = q_s (U(t;s) + J V(t;s)) O \varphi_t$. 

The bounds  \eqref{eq:upper-Cramer} and \eqref{eq:lower-Cramer} show that the rate function of the system is, if it exists, for  sufficiently small $x>0$ given by 
\begin{align}
\label{eq:rate-function-thm}
\Lambda^*_{\psi_{N,t}}(x) = - \frac{x^2}{2 \| f_{0;t}\|_2^2} + O(x^{5/2}) .
\end{align}
In particular, Theorem \ref{thm:cramer} determines the rate function $\Lambda^*_{\psi_{N,t}}$ up to quadratic order.  Note that  for time $t=0$ the quadratic term in \eqref{eq:rate-function-thm} agrees with the one of Cram\'er's theorem \eqref{eq:rate-lft} as 
\begin{align}
\| f_{0;0} \|_2^2 = \| q_0 O \varphi \|_2^2 = \langle \varphi, \, O^2 \varphi \rangle - \vert \langle \varphi, \, O \varphi \rangle \vert^2 . 
\end{align}
In the regime $x= O(1/\sqrt{N})$, our findings agree with the central limit theorems previously obtained in \cite{BKS,BSS} proving that 
\begin{align}
\lim_{N \rightarrow \infty}\mathbb{P}_{\psi_{N,t}} \left[ \sqrt{N} O_{N,t} < x \right] = \frac{1}{\sqrt{2\pi} \| f_{0;t}\|_2} \int_{- \infty}^x e^{-\frac{r^2}{2 \| f_{0;t}\|_2^2}} dr  \; .
\end{align}
We remark that a central limit theorem still holds true when replacing the weak mean field potential (given by $N^{3\beta} v(N^\beta x )$ for $\beta =0$) with more singular interactions in the intermediate regime (corresponding to $0<\beta<1$) \cite{R}. In the physically most relevant Gross--Pitaevski regime ($\beta =1$), a central limit theorem holds for the ground state \cite{RS} showing, in particular, that the fluctuations around the condensate are approximately quasi-free.  The validity of large deviation estimates for fluctuations around the condensate for Bose-Einstein condensates in the ground state is still an open question, however.

The proof of Theorem \ref{thm:cramer} (given in Section \ref{sec:proof-thm1}) is based on a lower and an upper bound on the logarithmic moment generating function stated in the following (and proven in Section \ref{sec:proof-thm2}). 

\begin{theorem}
\label{thm:main}
Under the same assumptions as in Theorem \ref{thm:cramer}, 

\begin{itemize}
\item[\rm{(i')}] there exists a constant $C_1>0$ such that for all $0\leq \lambda \leq e^{-e^{C_1  t}} / \vertiii{O}  $  we have 
\begin{align}
\label{eq:thm-upper-bound}
\limsup_{N\to \infty} N^{-1} \log \bE_{\psi_{N,t}} e^{\lambda N O_{N,t} } \leq \frac{\lambda^2}{2} \| f_{0;t}\|_2^2 + C_1 e^{e^{C_1  t}}\lambda^3 \vertiii{O}^3  . 
\end{align}
\item[\rm{(ii')}]there exists a constant $C_2 >0$ such that for all $0 \leq \lambda \leq   e^{-e^{C_2  t}}  / \vertiii{O}   $ we have 
\begin{align}
\label{eq:thm-lower-bound}
\liminf_{N\to \infty} N^{-1} \log \bE_{\psi_{N,t}} e^{\lambda N O_{N,t} } \geq \frac{\lambda^2}{2}  \| f_{0;t}\|_2^2 - C_2e^{e^{C_2  t}} \lambda^3 \vertiii{O}^3  . 
\end{align}
\end{itemize}
\end{theorem}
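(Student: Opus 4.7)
I would transfer the whole problem to Fock space through the excitation unitary $\cU_{N,\ph_t}$ that factors out the condensate mode. Since $N O_{N,t}$ equals $d\Gamma(O)-N\langle \ph_t,O\ph_t\rangle$ on the $N$--particle sector, conjugation by $\cU_{N,\ph_t}$ produces a Fock--space operator
$$B_N(t)=\sqrt{N-\cN}\,a(q_t O\ph_t)+a^*(q_t O\ph_t)\sqrt{N-\cN}+d\Gamma(q_t Oq_t)-\cN\langle \ph_t, O\ph_t\rangle.$$
Writing $\xi_{N,t}=\cU_{N,\ph_t}\psi_{N,t}=\cW_N(t;0)\Omega$, where $\cW_N$ is the associated fluctuation dynamics and $\Omega$ the Fock vacuum, the generating function becomes
$$\bE_{\psi_{N,t}} e^{\lambda N O_{N,t}}=\langle \Omega,\cW_N(t;0)^*\,e^{\lambda B_N(t)}\,\cW_N(t;0)\Omega\rangle.$$

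\textbf{Bogoliubov extraction of the Gaussian.} Next I would compare $\cW_N(t;0)$ with the quasi-free Bogoliubov dynamics $\cW_\infty(t;0)$ generated by the quadratic Hamiltonian built from $h_{\rm H}(s)$, $\wt K_{1,s}$ and $\wt K_{2,s}$ (precisely the operators appearing in \eqref{eq:def-fst}), and simultaneously truncate $B_N(t)$ to its leading linear-field part $\sqrt{N}\,\phi(q_t O\ph_t)$ with $\phi(f)=a(f)+a^*(f)$. The Bogoliubov structure is designed so that $\cW_\infty(t;0)^*\phi(q_t O\ph_t)\cW_\infty(t;0)=\phi(f_{0;t})$, and the Baker--Campbell--Hausdorff identity on the vacuum then gives
$$\langle\Omega,\,e^{\lambda\sqrt{N}\phi(f_{0;t})}\,\Omega\rangle=\exp\!\bigl(\tfrac{\lambda^{2}N}{2}\|f_{0;t}\|_2^2\bigr),$$
which, after $N^{-1}\log$, produces exactly the quadratic term $\tfrac{\lambda^2}{2}\|f_{0;t}\|_2^2$ common to (i') and (ii').

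\textbf{Error estimates.} What remains is to quantify three sources of error contributing at most $O(\lambda^3 \vertiii{O}^3)$ to the log-generating function: (a) the approximation of $\cW_N$ by $\cW_\infty$ when applied to the Fock vacuum, which is standard Gr\"onwall-type control of the many-body fluctuation dynamics against its Bogoliubov limit; (b) the replacement of $B_N(t)$ by $\sqrt{N}\phi(q_t O\ph_t)$, whose error terms $d\Gamma(q_tOq_t)$, $\cN\langle\ph_t, O\ph_t\rangle$, and $[\sqrt{(N-\cN)/N}-1]$--corrections are of relative size $\cN/N$ on states with bounded $\cN$-moments; (c) BCH commutator remainders in the exponential. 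A convenient way to carry this out is via the $\lambda$-derivative of $N^{-1}\log\|e^{\lambda B_N(t)/2}\cW_N(t;0)\Omega\|^2$: the derivative is a tilted expectation of $B_N(t)/N$ which, after the Bogoliubov rotation, equals $\lambda\|f_{0;t}\|_2^2$ up to remainders controlled by moments of $\cN$ and $\cN^2$ in the tilted state.

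\textbf{Main obstacle.} The crux is precisely the control of these tilted moments of $\cN$: because $B_N(t)$ contains the $a^\#(q_tO\ph_t)$ displacement, the tilted vector $e^{\lambda B_N/2}\cW_N\Omega$ is essentially a coherent state of parameter $\sim \lambda\sqrt{N}$, so $\langle\cN\rangle$ in the tilted state is of order $\lambda^2 N \,\|q_t O\ph_t\|^2$ and must be bounded uniformly in $N$ after division. Propagating such moment bounds through both the full and the Bogoliubov dynamics introduces factors that grow exponentially in $t$ (due to the generic exponential growth of the Bogoliubov kernels $U(t;s)$, $V(t;s)$ on weighted Fock sectors), and these factors compose with the exponential-in-$\lambda$ of the tilting to yield the iterated-exponential $e^{e^{Ct}}$ restriction on the admissible range of $\lambda$. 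The upper and lower bounds (i') and (ii') are obtained in parallel from the same two-sided differential inequality, the sign of the cubic remainder deciding which appears.
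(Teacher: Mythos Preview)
Your overall architecture --- pass to Fock space via the excitation map, recognize that the Bogoliubov dynamics transports $\phi(q_tO\ph_t)$ to $\phi(f_{0;t})$, and read off the Gaussian $\tfrac{\lambda^2}{2}\|f_{0;t}\|_2^2$ from the vacuum expectation --- is exactly the paper's picture. The discrepancy is in how the comparison is executed, and your step (a) hides a genuine difficulty.

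You propose to replace $\cW_N$ by $\cW_\infty$ using ``standard Gr\"onwall-type control of the many-body fluctuation dynamics against its Bogoliubov limit''. The standard estimates give, at best, $\|(\cW_N-\cW_\infty)\Omega\|$ or low $\cN$-moments thereof of size $O(N^{-1/2})$. But the quantity you must control is $\langle\Omega,\cW_N^*e^{\lambda B_N}\cW_N\Omega\rangle$, and $\|e^{\lambda B_N/2}\cW_N\Omega\|^2$ is itself the moment generating function, of size $e^{c\lambda^2 N}$. A norm difference of order $N^{-1/2}$ multiplied by something of order $e^{c\lambda^2 N}$ is useless for a statement about $N^{-1}\log$. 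You correctly note that the tilted state has $\langle\cN\rangle\sim\lambda^2 N$, but this is precisely why the additive $\cW_N\to\cW_\infty$ comparison fails: you need a \emph{multiplicative} comparison on the exponential scale, and that is not what the standard fluctuation estimates provide.

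The paper avoids this by never separating $\cW_N$ from the exponential. Instead it interpolates in the \emph{time} variable: for $0\le s\le t$ one sets
\[
\xi_t(s)=e^{\mp\lambda\kappa_s\cN_+(s)/2}\,e^{\lambda\sqrt{N}\phi_+(f_{s;t})/2}\,\cW_N(s;0)\Omega,
\]
with $f_{s;t}$ solving \eqref{eq:def-fst}, and runs Gronwall on $\partial_s\|\xi_t(s)\|^2$. The equation for $f_{s;t}$ is chosen so that the order-$\lambda\sqrt{N}$ contributions from $\cL_N(s)$ and from $\partial_s\phi_+(f_{s;t})$ cancel, leaving remainders bounded by $C(N\lambda^3\vertiii{O}^3+\lambda)\|\xi_t(s)\|^2$. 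This is the multiplicative estimate you need; the limiting dynamics $\cW_\infty$ never enters explicitly, only through the ODE that $f_{s;t}$ satisfies. A separate interpolation (in an auxiliary parameter, Lemma~\ref{lemma:step1}) removes $d\Gamma(q_t\wt O_tq_t)$ in favor of $e^{\pm 2\lambda\|O\|\cN_+}$, and a final explicit computation (Lemma~\ref{lemma:step3}) handles the vacuum expectation with the modified operators $b,b^*$ (your BCH identity is written for the standard $a,a^*$; for $\phi_+$ the analogue requires work because of the $\sqrt{1-\cN/N}$ factors).

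The double exponential arises not from composing an ``exponential-in-$\lambda$'' with time growth, but from the fact that controlling the generator requires $\|f_{s;t}\|_{H^2}$, whose Gronwall estimate has exponentially-in-$s$ growing coefficients coming from $\|\ph_s\|_{H^k}$ for $k\ge 2$; a Gronwall with $e^{Cs}$ coefficients integrates to $e^{e^{Ct}}$.
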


The upper bound (i')  on the logarithmic moment generating function (as well as the resulting upper bound on the rate function in Theorem~\ref{thm:cramer}(i)) is an extension of the large deviation estimate obtained in \cite{KRS} to more general interaction potentials. In particular, the assumptions on the potential in Theorem \ref{thm:cramer} involve the physical interesting Coulomb potential which was excluded by the assumptions $v \in L^1( \mathbb{R}^3) \cap L^\infty ( \mathbb{R}^3)$ in \cite{KRS}. Note that the term cubic in $\lambda$ in \eqref{eq:thm-upper-bound}  depends on time through a double exponential, compared to a term exponential in time in \cite{KRS}, which is a consequence of allowing less regular interaction potentials here (entering  the proof through Lemmas~\ref{lemma:Kj} and~\ref{lemma:f}).  The quadratic term of \eqref{eq:thm-upper-bound} agrees with the findings from \cite[Theorem 1.1]{KRS}. In particular, the definition of $f_{0;t}$ in \eqref{eq:def-fst} here is the same as in \cite[ Eq.~(1.1)]{KRS}.\footnote{There are minor mistakes in \cite[Eq.~(1.1)]{KRS}, where \eqref{def:K-ohne} instead of the projected kernels \eqref{def:K} is used. This does not change the rest of the proof, which is based on estimates of quadratic forms in the truncated Fock space where the projections $q_s$ act as identities (see \eqref{eq:lastline}). Also there are two typos in \cite[Eq.~(1.1)]{KRS}, the  operators $J$ and  $K_{2,s}$ are switched and the sign in front of the operator $K_{2,s}$ is false. } 

In contrast to \cite{KRS}, we prove here also a matching lower bound (ii') on the logarithmic moment generating function (resulting, together with the upper bound, in the lower bound on  the rate function in Theorem~\ref{thm:cramer}(ii)).  This allows to determine the rate function $\Lambda_{\psi_{N,t}}^*$ up to quadratic order. In particular, we show that $\Lambda_{\psi_{N,t}}^*$ coincides up to quadratic order with the rate function of Bogoliubov's quasi-free approximation of the fluctuations around the condensate.  Whether this holds true for higher order terms  remains  an open question. In fact, we don't expect that the rate function of Bogoliubov's approximation of the fluctuations agrees with $\Lambda^*_{\psi_{N,t}}$ to all orders.

\section{Proof of Theorem  \ref{thm:cramer} } 
\label{sec:proof-thm1}

The proof of Theorem \ref{thm:cramer} uses ideas of the proof of Cramer's theorem involving estimates on the logarithmic moment generating function in Theorem \ref{thm:main}. The upper bound (i) follows by Chebychev's inequality from (i'), while the proof of the lower bound is more involved and uses both (i') and (ii'). 

\begin{proof}
\textbf{Upper bound \rm{(i)}:} For $\lambda >0$, we have 
\begin{align}
\bP_{\psi_{N,t}} \left[ O_{N,t} > x \right] = \bP_{\psi_{N,t}} \left[ e^{-\lambda N x}e^{   \lambda N   O_{N,t} } > 1 \right] .
\end{align} 
Chebychev's inequality implies that
\begin{align}
\bP_{\psi_{N,t}} \left[ O_{N,t}> x \right]  \leq  e^{-\lambda N x}\; \bE_{\psi_{N,t}} \left[ e^{  \lambda N  O_{N,t} } \right] ,
\end{align}
and we find with Theorem \ref{thm:cramer} for $\lambda < e^{-e^{C_1 t}} / \vertiii{O}$
\begin{align}
\limsup_{N \rightarrow \infty} N^{-1} \log \bP_{\psi_{N,t}} \left[ O_{N,t}> x \right] \leq -\lambda x + \frac{\lambda^2}{2} \| f_{0;t}\|_2^2 + C_1 e^{e^{C_1  t}}\lambda^3 \vertiii{O}^3 \; .
\end{align}
For 
\begin{align}
\label{eq:const0}
x < e^{-e^{C_1 t}} \| f_{0;t}\|_2^2 / \vertiii{O}
\end{align}
let $\lambda  = x/ \| f_{0;t}\|_2^2$. Then 
\begin{align}
\limsup_{N \rightarrow \infty} N^{-1} \log \bP_{\psi_{N,t}} \left[ O_{N,t}> x \right] \leq -  \frac{x^2}{2 \| f_{0;t}\|_2^2}  +\frac{x^3 C_1 e^{e^{C_1  t}}\vertiii{O}^3}{\| f_{0;t}\|_2^6} \; .
\end{align}

\textbf{Lower Bound \rm(ii):} For arbitrary $\varepsilon >0$, we have 
\begin{align}
N^{-1} \log  \bP_{\psi_{N,t}} \left[  O_{N,t} > x\right] \geq N^{-1}\log  \bP_{\psi_{N,t}} \left[ O_{N,t} \in ( x ,  x+\varepsilon)\right]  
\end{align}
and it suffices to consider in the following 
\begin{align}
\bP_{\psi_{N,t}} \left[ O_{N,t} \in ( x, x+\varepsilon)\right] &= \langle \psi_{N,t}, \chi_{(x, x+ \varepsilon)} \left(  O_{N,t} \right) \psi_{N,t} \rangle\notag\\
&= \langle \psi_{N,t}, \chi_{(x, x+\varepsilon)} \left(  O_{N,t} \right) e^{- \lambda N  O_{N,t} } e^{\lambda  N O_{N,t}} \psi_{N,t} \rangle .
\end{align}
On the support of $\chi_{(x, x+ \varepsilon)} (O_{N,t})$ we have  $e^{-\lambda N  O_{N,t}}\geq e^{-  \left( x + \varepsilon \right)  \lambda N  }$ for $\lambda >0$ and we find 
\begin{equation}
\label{eq:ct1}
 \bP_{\psi_{N,t}} \left[ O_{N,t} \in (x, x+ \varepsilon)\right] \geq e^{-  \left(  x + \varepsilon \right)  \lambda N }\langle \psi_{N,t}, \chi_{(x, x+\varepsilon)} \left(O_{N,t} \right)  e^{\lambda N  O_{N,t}} \psi_{N,t} \rangle .
\end{equation}
It is easy to check that 
\begin{equation}
\label{eq:tildeP}
\widetilde{\bP}_{\psi_{N,t}} \left[O_{N,t} \in A \right] = e^{-N \Lambda_{N,t} ( \lambda ) } \langle \psi_{N,t}, \chi_{A} \left( O_{N,t}  \right)   e^{\lambda  N O_{N,t} } \psi_{N,t} \rangle  
\end{equation}
for $A\subset \bR$ and 
\begin{align}
\Lambda_{N,t} ( \lambda ) =  N^{-1}\log \langle \psi_{N,t}, e^{\lambda N  O_{N,t}  }  \psi_{N,t} \rangle 
\end{align}
defines a probability distribution.  We use \eqref{eq:tildeP} to rewrite the expression \eqref{eq:ct1} as 
\begin{align}
 \bP_{\psi_{N,t}} \left[ O_{N,t} \in (x, x+ \varepsilon)\right] &\geq  e^{  \lambda N \left(-  x - \varepsilon\right)   + N\Lambda_{N,t}(\lambda) } \widetilde{\bP}_{\psi_{N,t}} \left[ O_{N,t}\in (x,  x+ \varepsilon) \right] \notag\\
 &= e^{  \lambda N \left(-  x - \varepsilon \right)  +N  \Lambda_{N,t}(\lambda)  }  \left( 1 -\widetilde{\bP}_{\psi_{N,t}} \left[  O_{N,t}  \leq x \right] - \widetilde{\bP}_{\psi_{N,t}} \left[  O_{N,t} \geq x + \varepsilon \right] \right)   \label{eq:lb}.
\end{align} 
Similarly to the upper bound's proof, we use Chebychev's inequality for the last two terms on the r.h.s.  and obtain for arbitrary $\lambda, \mu, \widetilde{\mu} \geq 0$ 
\begin{align}
&  \bP_{\psi_{N,t}}  \left[ O_{N,t} \in ((x, x+ \varepsilon)\right] \notag\\ 
 & \geq e^{  \lambda N \left(-  x - \varepsilon \right)  +N  \Lambda_{N,t}(\lambda) }  \left( 1 -e^{N \left(-  \Lambda_{N,t} (\lambda) +  \widetilde{\mu} x + \Lambda_{N,t} (\lambda - \widetilde{\mu}) \right) }-e^{N \left( - \Lambda_{N,t} (\lambda) -  \mu (x + \varepsilon ) +\Lambda_{N,t} (\lambda + \mu )\right) } \right)   \,.\label{eq:lb2}
\end{align} 
For given $x \in \bR$, we need to choose $\lambda,\mu, \widetilde{\mu} $ and $\varepsilon$ such that both 
\begin{align}
\limsup_{N\to \infty} \left( -   \Lambda_{N,t} (\lambda) +  \widetilde{\mu} x  + \Lambda_{N,t} (\lambda - \widetilde{\mu}) \right)<0
\end{align}
and 
\begin{align}
\limsup_{N\to\infty} \left(  -  \Lambda_{N,t} (\lambda) -  \mu (x + \varepsilon ) +\Lambda_{N,t} (\lambda + \mu ) \right)<0\,.
\end{align}
In fact, for $0 < \delta < \widetilde{\delta}$, let $\lambda = x ( 1 + \delta ) / \| f_{0;t} \|_2^2, \varepsilon = x \widetilde{\delta}, \widetilde{\mu} = \delta x / \| f_{0;t}\|^2_{2}$ and $\mu = ( \widetilde{\delta} - \delta ) x / \| f_{0;t}\|_{2}^2$.  Then, as long as, 
\begin{align}
\label{eq:const1}
0\leq x \leq \frac{\min \lbrace  e^{-e^{C_1 t}}, \;   e^{-e^{C_2 t}}/(1 + \delta)\rbrace \| f_{0;t}\|_{2}^2} { \vertiii{O} }
\end{align}
we have with Theorem \ref{thm:main}
\begin{align}
&\limsup_{N\to \infty} \left( -\Lambda_{N,t}(\lambda) + \widetilde \mu x + \Lambda_{N,t}(\lambda -\widetilde \mu) \right)  \nonumber \\ & \leq  -  \frac{\lambda^2}{2} \| f_{0;t}\|_{2}^2 +  C_2 e^{e^{C_2 t}} \lambda^3 \vertiii{O}^3 + \widetilde \mu x +  \frac{(\lambda-\widetilde \mu)^2}{2}\| f_{0;t}\|_{2}^2 + C_1 e^{e^{C_1 t}} (\lambda-\widetilde \mu)^3 \vertiii{O}^3\notag
\\ & =  -  \frac{ x^2 \delta^2}{2 \| f_{0;t}\|_{2}^2} +   \frac{x^3   \vertiii{O}^3}{\|f_{0;t}\|^6}  \left(  C_2 e^{e^{C_2 t}}(1+\delta)^3    + C_1 e^{e^{C_1 t}}\right) <0
\end{align}
if
\begin{align}\label{eq:const2}
  x  \vertiii{O}^3 \left(  C_2 e^{e^{C_2 t}} (1+\delta)^3    + C_1 e^{e^{C_1 t}}\right) <   \frac{  \delta^2}{2} \| f_{0;t}\|_{2}^4\,.
\end{align}
Similarly, as long as 
\begin{align}\label{eq:const3}
0\leq x \leq \frac{ \min\lbrace e^{-e^{C_1 t}}/(1+\widetilde\delta),e^{-e^{C_2 t}}/(1+\delta) \rbrace \| f_{0;t}\|_{2}^2  } {\vertiii{O}}
\end{align}
we have from Theorem \ref{thm:main}
\begin{align}
&\limsup_{N\to \infty} \left( -\Lambda_{N,t}(\lambda)  - \mu (x + \varepsilon)  + \Lambda_{N,t}(\lambda + \mu) \right) \notag\\
 & \leq    -  \frac{\lambda^2}{2}\| f_{0;t}\|_{2}^2 +   C_2 e^{e^{C_2 t}}\lambda^3 \vertiii{O}^3- \mu (x + \varepsilon)  +  \frac{(\lambda + \mu)^2}{2} \| f_{0;t}\|_{2}^2 + C_1 e^{e^{C_1 t}}  (\lambda + \mu)^3 \vertiii{O}^3 \notag
\\ & =  -  \frac{ x^2 (\widetilde \delta -\delta)^2}{2 \| f_{0;t}\|_{2}^2 } +   \frac{x^3  \vertiii{O}^3}{\| f_{0;t}\|_{2}^6}  \left(  C_2 e^{e^{C_2 t}}(1+\delta)^3    + C_1 e^{e^{C_1 t}} (1+\widetilde \delta)^3\right) <0
\end{align}
if
\begin{equation}\label{eq:const4}
  x  \vertiii{O}^3  \left( C_2 e^{e^{C_2 t}} (1+\delta)^3    + C_1 e^{e^{C_1 t}}(1+\widetilde \delta)^3\right) <   \frac{  (\widetilde \delta - \delta)^2}{2}  \| f_{0;t}\|_{2}^4\,.
\end{equation}
In particular, under conditions \eqref{eq:const1}, \eqref{eq:const2}, \eqref{eq:const3} and \eqref{eq:const4}, we have from \eqref{eq:lb2}
\begin{align}
\liminf_{N \rightarrow \infty} N^{-1} \log \bP \left[ O_{N,t} >x\right]   &\geq \liminf_{N \rightarrow \infty} \Lambda_{N,t}(\lambda) -\lambda(x+\varepsilon) \notag\\ &\geq - \frac{x^2 }{2 \| f_{0;t}\|_{2}^2}(1+ 2 \widetilde\delta(1+\delta) )   - C_2 e^{e^{C_2 t}}  x^3 (1+\delta)^3 \frac{ \vertiii{O}^3}{ \| f_{0;t}\|_{2}^6}\,.
\end{align}
With $C_3=\max\{C_1,C_2\}$ 
we can take  
\begin{equation}
\delta^2 = \left( \frac{\tilde\delta}{2}\right)^2 = 70\, C_3 e^{e^{C_3 t} }  \frac { x \vertiii{O}^3} { \| f_{0;t}\|_{2}^4}
\end{equation}
and \eqref{eq:const2} as well as \eqref{eq:const4} are satisfied as long as $\delta <1$. 
For
\begin{align}
x \leq \min \left\lbrace \tfrac 13 e^{-e^{C_3 t}} \| f_{0;t}\|_{2}^2 / \vertiii{O}, \;  \tfrac 1{70\, C_3} e^{- e^{ C_3 t} } \| f_{0;t}\|_{2}^4 /  \vertiii{O}^3 \right\rbrace 
\end{align}
we can thus  conclude that
\begin{align}
\liminf_{N \rightarrow \infty} N^{-1} \log \bP \left[ O_{N,t}>x\right]   \geq - \frac{x^2 }{2  \| f_{0;t}\|_{2}^2} - C_4 e^{e^{C_4 t}}  x^{5/2} \frac{\vertiii{O}^{3/2}}{\| f_{0;t}\|_{2}^4}  
\end{align}
for $C_4 >0$ large enough. We shall show in Lemma~\ref{lemma:f} below that $\|f_{0;t}\|_2 \leq \vertiii{O} e^{C|t|}$ for suitable $C>0$, which allows for the simpler condition on $x$ as stated in Theorem~\ref{thm:cramer}(ii). 
\end{proof}

\section{Proof of Theorem \ref{thm:main}} 
\label{sec:proof-thm2}

\subsection{Properties of $K_{j,s}$ and $f_{s;t}$}
\label{sec:prel-K}
 
In this section, we show in Lemma \ref{lemma:f} useful estimates on the function $f_{s;t}$ defined in \eqref{eq:def-fst}.  To this end, we first collect in Lemma~\ref{lemma:Kj} properties of the kernels $K_{j,s}$ defined in \eqref{def:K-ohne}.  
 These Lemmas are the crucial ingredient to generalize the result of \cite{KRS} to more singular interaction potentials.  The main difference is that in \cite{KRS} estimates of the form \eqref{eq:estimates_K-2} rely on the propagation of the $H^1$-norm of $\varphi_s$ using, in particular, that by conservation of energy and \eqref{eq:ass-v} we have
\begin{align}
\label{eq:estimate-phi-H1}
\| \varphi_s \|_{H^1(\bR^3)} \leq C \| \varphi \|_{H^1(\bR^3)} 
\end{align}
for a constant $C>0$.  In contrast, here, we need the propagation of higher Sobolev norms of $\varphi_s$ in \eqref{eq:estimate-K-crucial-step}, i.e. bounds of the form 
\begin{align}
\label{eq:estimate-phi-H^2}
\| \varphi_s \|_{H^k (\bR^3)} \leq C e^{C s}\| \varphi_0 \|_{H^k (\bR^3)} 
\end{align}
for $k \geq 2$ which are well-known (see e.g. \cite{Caz}).  These lead to bounds exponential in time in  \eqref{eq:estimates_K-2}) and, thus, to bounds double exponential in time in  \eqref{eq:estimate-f-H2}) because of the use of a Gronwall type estimate. These bounds effect Lemma \ref{lemma:step2} and, consequently, the error terms in Theorems~\ref{thm:cramer} and ~\ref{thm:main}. 

\begin{lemma}
\label{lemma:Kj}
For $s \in \bR$ and $v$ satisfying \eqref{eq:ass-v},  let $\varphi_s$ denote the solution to the Hartree equation \eqref{eq:hartree} with initial data $\varphi \in H^4( \bR^3)$. There exists a constant $C>0$ such that 
\begin{align}
\label{eq:estimate-v1}
\| v * \vert \ph_s \vert^2& \|_{\infty} \leq C \; , \\
\|  \nabla v * \vert \ph_s \vert^2 \|_{\infty} \leq C\, e^{C \vert s \vert } \ , &\quad\|  \Delta v * \vert \ph_s \vert^2 \|_{\infty} \leq C e^{C \vert s \vert } \; , \label{eq:estimate-v2}
\end{align}
and, furthermore for $j=1,2$ and $f \in H^2 ( \bR^3)$ 
\begin{align}
\label{eq:estimates_K-1}
\| K_{j,s} & \|_{L^2 \left( \bR^3 \times \bR^3 \right)} \leq C  \\
\| \nabla K_{j,s} f \|_2   \leq C e^{C \vert s \vert} \| f \|_{H^1 (\bR^3)}, & \quad \| \Delta K_{j,s} f \|_2   \leq C e^{C \vert s \vert} \| f \|_{H^2 (\bR^3)} .\label{eq:estimates_K-2}
\end{align}
\end{lemma}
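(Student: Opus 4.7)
The strategy is to control every expression by Cauchy--Schwarz combined with the operator inequality $v^2 \leq C(1-\Delta)$, used throughout in the pointwise form
\begin{align*}
\int v(x-y)^2 |g(y)|^2 \, dy \leq C \|g\|_{H^1}^2,
\end{align*}
valid uniformly in $x$ for any $g \in H^1(\mathbb{R}^3)$. The two inputs from the Hartree dynamics that I will cite are (a) the uniform $H^1$-bound $\|\varphi_s\|_{H^1} \leq C$, which follows from conservation of energy after estimating the potential energy $\tfrac12 \int v(x-y)|\varphi_s(x)|^2|\varphi_s(y)|^2\,dx\,dy$ via the same Cauchy--Schwarz inequality; and (b) the propagation of higher Sobolev norms $\|\varphi_s\|_{H^k} \leq C e^{C|s|} \|\varphi\|_{H^k}$ for $k=2,3$, which is standard for the cubic Hartree flow (see \cite{Caz}) and is the source of all $e^{C|s|}$ factors.

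For the scalar bounds \eqref{eq:estimate-v1} I write $(v*|\varphi_s|^2)(x) = \langle \varphi_s, v(x-\cdot)\varphi_s\rangle$ and Cauchy--Schwarz plus the operator inequality give $|(v*|\varphi_s|^2)(x)| \leq \|\varphi_s\|_2 \|v(x-\cdot)\varphi_s\|_2 \leq C\|\varphi_s\|_{H^1}^2 \leq C$. For \eqref{eq:estimate-v2} I move derivatives onto $|\varphi_s|^2$ via $\nabla(v*|\varphi_s|^2) = v * 2\mathrm{Re}(\overline{\varphi_s}\nabla\varphi_s)$ and $\Delta(v*|\varphi_s|^2) = v *(2\mathrm{Re}(\overline{\varphi_s}\Delta\varphi_s) + 2|\nabla\varphi_s|^2)$. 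Each summand is then bounded by Cauchy--Schwarz; the worst one is $v*|\nabla\varphi_s|^2$, for which the operator inequality applied to $g = \nabla\varphi_s$ yields $\|v*|\nabla\varphi_s|^2\|_\infty \leq C \|\nabla\varphi_s\|_{H^1}^2 \leq C \|\varphi_s\|_{H^2}^2 \leq C e^{C|s|}$.

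For the kernel bounds I factor $(K_{j,s}f)(x) = \varphi_s(x)(v*g_j)(x)$, where $g_1(y)=\overline{\varphi_s(y)}f(y)$ and $g_2(y)=\varphi_s(y)f(y)$. The Hilbert--Schmidt bound \eqref{eq:estimates_K-1} follows from Fubini and
\begin{align*}
\int |\varphi_s(x)|^2 \Big(\int v(x-y)^2 |\varphi_s(y)|^2\, dy\Big)\, dx \leq C\|\varphi_s\|_{H^1}^2 \|\varphi_s\|_2^2.
\end{align*}
For \eqref{eq:estimates_K-2} I apply the product rule, using $\nabla(v*h) = v*\nabla h$ to commute derivatives into the convolution. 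Each resulting summand has the form $(\nabla^\alpha \varphi_s)(x) \cdot (v*h_{\beta,\gamma})(x)$ with $|\alpha|+|\beta|+|\gamma|\leq 1$ or $\leq 2$, where $h_{\beta,\gamma}$ is a product of a derivative of $\varphi_s$ of order $|\beta|$ with a derivative of $f$ of order $|\gamma|$. Pointwise Cauchy--Schwarz on the convolution then yields the bound $|(v*h_{\beta,\gamma})(x)| \leq C\|\varphi_s\|_{H^{|\beta|+1}}\|f\|_{H^{|\gamma|}}$, and taking the $L^2$-norm multiplied by $\|\nabla^\alpha\varphi_s\|_2 \leq \|\varphi_s\|_{H^{|\alpha|}}$ and summing over all terms gives the stated estimate, the $e^{C|s|}$ factor entering only through $\|\varphi_s\|_{H^k}$ for $k\geq 2$.

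The one technical obstacle to keep in mind is that $\|v*g\|_2$ need not be finite for a general $g \in L^2$, since $v$ is allowed to be as singular as the Coulomb potential and no $L^p$ assumption is made on $v$. This is the reason the analogous step in \cite{KRS} required $v\in L^1\cap L^\infty$. The device used throughout the above is to always pair the convolution with a factor of $\varphi_s$ (or a derivative of it) before taking the $L^2$-norm in $x$, so that Cauchy--Schwarz can be applied inside the convolution with weight $|\varphi_s|^2$, converting the otherwise divergent $\int v^2\, dy$ into the finite $\int v^2|\varphi_s|^2\, dy \leq C\|\varphi_s\|_{H^1}^2$. This substitution is the sole price paid for relaxing the integrability hypotheses on $v$, and it is exactly what forces the higher Sobolev norms of $\varphi_s$ into the estimates and hence the double-exponential dependence in the main theorems.
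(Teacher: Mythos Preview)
Your proof is correct and follows essentially the same approach as the paper: both arguments transfer derivatives from $v$ onto $\varphi_s$ (and $f$) via the convolution structure, then control each term pointwise by Cauchy--Schwarz and the operator inequality $v^2\le C(1-\Delta)$, invoking the uniform $H^1$-bound \eqref{eq:estimate-phi-H1} and the exponentially growing higher Sobolev bounds \eqref{eq:estimate-phi-H^2}. Your systematic product-rule bookkeeping $(\nabla^\alpha\varphi_s)(v*h_{\beta,\gamma})$ is a slight repackaging of the paper's explicit integration by parts \eqref{eq:estimate-K-crucial-step}, but the estimates are the same.
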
 

\begin{proof}  From \eqref{eq:ass-v}, we have
\begin{align}\label{asy}
\vert \left( v* \vert \varphi_s \vert^2 \right) (x) \vert  \leq  \int \vert v(x-y) \vert^2 \;  \vert \varphi_s (y) \vert^2 dy+ C \| \varphi_s \|_{2}^2 \leq C \| \varphi_s \|_{H^1( \bR^3)}^2 
\end{align}
and \eqref{eq:estimate-v1} follows from \eqref{eq:estimate-phi-H1}. 
Similarly, since 
\begin{align}
\nabla v* \vert \varphi_s \vert^2 = 2 v* \Re\, \overline{\varphi_s} \nabla \varphi_s 
\end{align}
we have with \eqref{eq:estimate-phi-H^2}
\begin{align}
\|  \nabla v * \vert \ph_s \vert^2 \|_{\infty} \leq C \| \varphi_s \|_{H^2 (\bR^3)}^2 \leq C e^{C \vert s \vert } \, .
\end{align}
The second bound in \eqref{eq:estimate-v2}  follows in the same way. 

Moreover, with \eqref{def:K-ohne}, we have for $j=1,2$
\begin{align}
\| K_{j,s} \|_{L^2 \left( \bR^3 \times \bR^3 \right)}^2 = \int \vert \ph_s (x) \vert^2 v^2(x-y)\vert \ph_s (y)\vert^2 \; dxdy = \langle \ph_s,  \;\left( v^2 * \vert \ph_s \vert^2 \right)\ph_s \rangle 
\label{eq:K1}
\end{align}
and thus  \eqref{eq:estimates_K-1} follows by arguing as in  \eqref{asy} above.  In order to show \eqref{eq:estimates_K-2}, we integrate by parts 
\begin{align}
\int  \left( \nabla_x K_{1,s} \right) (x,y)  f  (y ) dy  =& \int v(x-y) \left( \nabla \varphi_s\right) (x ) \overline{\varphi_s (y)} f(y) dy \notag\\
&+  \int v(x-y) \varphi_s (x) \left( \overline{\nabla \varphi_s}\right)  (y) f (y) dy \notag\\
&+ \int v(x-y) \varphi_s (x)  \overline{ \varphi_s (y)} \nabla f (y) dy\label{eq:estimate-K-crucial-step}
\end{align}
and estimate with \eqref{eq:ass-v} similarly as above 
\begin{align}
\label{eq:eq2}
\| \nabla K_{1,s} f \|_2 \leq C \| \varphi_s \|_{H^3( \bR^3)}^2 \| f \|_{2}+  C \| \varphi_s \|_{H^2 ( \bR^3)}^2 \| f \|_{H^1 (\bR^3)}  \leq C e^{C \vert s \vert} \| f \|_{H^1 (\bR^3)} \; .
\end{align}
The second estimate in \eqref{eq:estimates_K-2} follows in the same way. 
\end{proof}

Because of \eqref{eq:estimate-phi-H^2}, one readily checks that the same bounds hold with $\widetilde{K}_{j,s}$ in place of $K_{j,s}$. Those bounds are in fact the ones we need below.

\begin{lemma}
\label{lemma:f} Under the same assumptions as in Theorem \ref{thm:main}, let $f_{s;t}$ be defined as in \eqref{eq:def-fst}.  Then, there exists a constant $C>0$ such that for all  $0\leq s \leq t$ we have
\begin{align}
\| f_{s;t} \|_2 \leq  \| O \| e^{C \vert t -s \vert } , \quad \| f_{s;t} \|_{H^2 \left( \bR^3 \right)} \leq  Ce ^{e^{C \vert t \vert }}\vertiii{O} \label{eq:estimate-f-H2} 
\end{align}
where $\vertiii{O}$ is  defined in \eqref{def:norm-O}. 
\end{lemma}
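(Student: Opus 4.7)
Both estimates follow from backward Gronwall arguments applied directly to the evolution \eqref{eq:def-fst}, using Lemma~\ref{lemma:Kj} (and its extension to $\widetilde K_{j,s}$) as the main input. Write $L_s := h_{\rm H}(s) + \widetilde K_{1,s} - \widetilde K_{2,s} J$ for the generator and note that $h_{\rm H}(s)$ and $\widetilde K_{1,s}$ are self-adjoint (the latter because $v$ is real and even), while the antilinear piece $-\widetilde K_{2,s}J$ is not. For the $L^2$-bound, differentiating $\|f_{s;t}\|_2^2$ along the flow gives
\[
\partial_s \|f_{s;t}\|_2^2 \;=\; 2\,\Im \langle f_{s;t}, L_s f_{s;t}\rangle \;=\; -2\,\Im\langle f_{s;t}, \widetilde K_{2,s}\overline{f_{s;t}}\rangle,
\]
which by Cauchy--Schwarz and the Hilbert--Schmidt bound \eqref{eq:estimates_K-1} is controlled by $2C\|f_{s;t}\|_2^2$. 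Backward Gronwall from the initial condition $\|f_{t;t}\|_2 = \|q_t O\varphi_t\|_2 \leq \|O\|$ gives the first claim.

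For the $H^2$-bound I would set $g_s := (1-\Delta) f_{s;t}$, so that $\|g_s\|_2$ is equivalent to $\|f_{s;t}\|_{H^2}$. From \eqref{eq:def-fst},
\[
i\partial_s g_s \;=\; L_s g_s \;+\; [(1-\Delta), L_s]\, f_{s;t}.
\]
Repeating the previous computation for the first term contributes $\leq 2C\|g_s\|_2^2$ to $\partial_s\|g_s\|_2^2$. For the commutator, the piece $[(1-\Delta), h_{\rm H}(s)] = -2\nabla(v*|\varphi_s|^2)\cdot \nabla - \Delta(v*|\varphi_s|^2)$ is bounded by \eqref{eq:estimate-v2} and yields $\leq Ce^{C|s|}\|f_{s;t}\|_{H^1}\leq Ce^{C|s|}\|g_s\|_2$. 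For the $\widetilde K_{j,s}$ pieces, using $\widetilde K_{j,s}=q_s K_{j,s}q_s$ I would expand
\[
[(1-\Delta), q_s K_{j,s}q_s] = [(1-\Delta),q_s]K_{j,s}q_s + q_s[(1-\Delta),K_{j,s}]q_s + q_s K_{j,s}[(1-\Delta),q_s],
\]
and bound the outer factors by $\|[(1-\Delta),q_s]\|_{\rm op} \leq C\|\varphi_s\|_{H^2}\leq Ce^{C|s|}$ together with the Hilbert--Schmidt estimate on $K_{j,s}$, while the inner commutator is controlled by $\|\Delta K_{j,s}h\|_2 \leq Ce^{C|s|}\|h\|_{H^2}$ from \eqref{eq:estimates_K-2}. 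Since $J$ commutes with $(1-\Delta)$, the same bound applies to $\widetilde K_{2,s}J$. Altogether
\[
\partial_s\|g_s\|_2^2 \;\leq\; Ce^{C|s|}\|g_s\|_2^2,
\]
and backward Gronwall from $s=t$ gives $\|g_s\|_2 \leq \|g_t\|_2 \exp\!\bigl(C\int_s^t e^{C|r|}dr\bigr) \leq \|g_t\|_2\, e^{Ce^{C|t|}}$.

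It remains to bound the initial datum $g_t=(1-\Delta)q_t O\varphi_t$. Splitting off the rank-one projection,
\[
\|g_t\|_2 \;\leq\; \|(1-\Delta) O(1-\Delta)^{-1}(1-\Delta)\varphi_t\|_2 + |\langle \varphi_t, O\varphi_t\rangle|\,\|\varphi_t\|_{H^2}\;\leq\; C\vertiii{O}\|\varphi_t\|_{H^2},
\]
where the expectation is controlled by writing $\langle\varphi_t, O\varphi_t\rangle = \langle (1-\Delta)^{-1}\varphi_t, (1-\Delta)O(1-\Delta)^{-1}(1-\Delta)\varphi_t\rangle$ and using Cauchy--Schwarz with \eqref{def:norm-O}. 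Together with $\|\varphi_t\|_{H^2}\leq Ce^{C|t|}$ from \eqref{eq:estimate-phi-H^2} this yields the claimed double-exponential bound. The main technical obstacle is the commutator estimate for $[(1-\Delta),\widetilde K_{j,s}]$: one must exploit the product structure $q_s K_{j,s}q_s$ to route two of the derivatives onto the Hartree factors $\varphi_s$ (producing the $e^{C|s|}$ growth via \eqref{eq:estimate-phi-H^2}) while only a \emph{single} factor of $\|f_{s;t}\|_{H^2}$ appears on the right-hand side; this is precisely the input from Lemma~\ref{lemma:Kj} that improves on the $H^1$-based argument of \cite{KRS} and is responsible for the double exponential in time in \eqref{eq:estimate-f-H2}.
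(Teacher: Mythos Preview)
Your proposal is correct and follows essentially the same route as the paper: the $L^2$-estimate is identical, and for the $H^2$-estimate the paper directly expands $\partial_s\langle f_{s;t},(1-\Delta)^2 f_{s;t}\rangle$ into the explicit terms of \eqref{eq:H2-estimate1} and bounds each via Lemma~\ref{lemma:Kj}, which is precisely your commutator computation $[(1-\Delta),L_s]$ written out term by term. One cosmetic point: for the backward-in-$s$ Gronwall you need $|\partial_s\|g_s\|_2^2|\leq Ce^{Cs}\|g_s\|_2^2$ (or the reverse inequality) rather than the one-sided bound you wrote, but your conclusion is the correct one.
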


\begin{proof}
Since 
\begin{align}
\| f_{s;t} \|_2^2 =& \| f_{t;t} \|_2^2 - \int_s^t  \partial_\tau \| f_{\tau;t} \|_2^2  \; d\tau =  \| f_{t;t} \|_2^2 - 2 \int_s^t   \Im \langle f_{\tau;t},   \widetilde{K}_{2,\tau} J f_{\tau;t} \rangle \; d\tau
\end{align}
we have with \eqref{eq:estimates_K-1} 
\begin{align}
\| f_{s;t} \|_2^2 \leq  \| f_{t;t} \|_2^2 + C \int_s^t  \| f_{\tau;t} \|_2^2 \; d\tau \; . 
\end{align}
Since $\| f_{t;t} \|_2 = \| q_t O \varphi_t \|_2 \leq  \| O \| $, the first bound in \eqref{eq:estimate-f-H2}  is a consequence of Gronwall's inequality. 

In order to show the second, we compute 
\begin{align}
\label{eq:H2-estimate1}
\partial_s  \|  f_{s;t} \|_{H^2 \left( \bR^3 \right)}^2 & = 2\, \Im \, \langle \left(   \Delta^2 - 2 \Delta \right) f_{s;t}, \; \left(  v* \vert \varphi_s \vert^2 +  \widetilde{K}_{1,s} \right) f_{s;t}\rangle  - 2\, \Im \, \langle  \left(- \Delta + 1 \right)^2 f_{s;t}, \;   \widetilde{K}_{2,s} J f_{s;t}\rangle\notag\\
  & = 4\, \Im\, \langle  \Delta  f_{s;t}, \;    \left( \nabla v* \vert \varphi_s \vert^2  \right) \nabla f_{s,t} \rangle + 2\, \Im\, \langle\Delta f_{s;t},     \left( \Delta v* \vert \varphi_s \vert^2  \right)  f_{s,t} \rangle \notag \\ &\quad + 2\, \Im\, \langle \Delta f_{s;t}, \;    \Delta \widetilde{K}_{1,s}  f_{s,t}\rangle  - 4\,\Im\, \langle \Delta f_{s;t}, \;    \left( v* \vert \varphi_s \vert^2 + \widetilde{K}_{1,s} \right)  f_{s,t} \rangle  \notag\\ & \quad - 2\, \Im\, \langle \left( - \Delta + 1 \right) f_{s;t}, \;   \left(- \Delta +1 \right)   \widetilde{K}_{2,s} J f_{s,t} \rangle \,.
\end{align} 
It follows from Lemma~\ref{lemma:Kj} that all the terms on the r.h.s. can be bounded by $C\|f_{s;t}\|_{H^2(\bR^3)}^2 e^{Cs}$. The second bound in \eqref{eq:estimate-f-H2} thus also follows from Gronwall's inequality, together with
\begin{align}
\| f_{t;t} \|_{H^2 \left( \bR^3 \right)} = \| q_t O \varphi_t \|_{H^2 \left( \bR^3 \right)} \leq  \|\varphi_t\|_{H^2(\bR^3)} \|O\| + \| O \varphi_t \|_{H^2 \left( \bR^3 \right)}
\leq
 \left( \| O \| + \vertiii{O}  \right)  \| \varphi_t \|_{H^2 \left( \bR^3 \right)}
\end{align}
and \eqref{eq:estimate-phi-H^2}.
\end{proof}

Note that the generalization of the interaction potential comes into play when using the estimates \eqref{eq:bound-h1} and \eqref{eq:f-h2} from Lemma \ref{lemma:Kj} and Lemma \ref{lemma:f}. These estimates lead to the bounds double exponential in time.

\subsection{Fluctuations around the condensate} For the proof of Theorem \ref{thm:main}, we need to study the fluctuations around the condensate in the truncated Fock space of excitations. This description is based on the observation of \cite{LNSS} that any $N$-particle bosonic wave function $\psi_N \in L_{\rm s}^2( \bR^{3N})$ can be decomposed as 
\begin{align}
\psi_N = \eta_0 \;  \varphi_t^{\otimes N}  + \eta_1  \otimes_{\rm s} \varphi_t^{\otimes (N-1)} + \cdots + \eta_N 
\end{align}
with $\eta_j \in L^2_{\perp \varphi_t} \left( \bR^3 \right)^{\otimes_{\rm s} j}$, where  $L^2_{\perp \varphi_t} ( \bR^3 )$ denotes the orthogonal complement in $L^2 ( \bR^3) $ of the condensate wave function $\varphi_t$ and $\otimes_{\rm s}$ the symmetric tensor product.  In particular, this observation allows to define the unitary operator 
\begin{align}
\label{def:U}
\cU_t: L_{\rm s}^2 \left( \bR^{3N} \right) \rightarrow \cF_{\varphi_t}^{\leq N} = \bigoplus_{j=0}^N L^2_{\perp \varphi_t} \left( \bR^3 \right)^{\otimes_{\rm s} j }
\end{align}
mapping an $N$-particle bosonic wave function $\psi_N$ onto an element of the truncated Fock space, with $\cU_t \psi_N = \lbrace \eta_0, \dots, \eta_N \rbrace$ describing the excitations orthogonal to the condensate.  On the full bosonic Fock space (built over $L^2( \bR^3)$) we have the usual creation and annihilation operators, given for $f \in L ^2( \bR^3 )$ by 
\begin{align}
a^*(f) = \int  f(x) \, a^*_x \, dx, \quad a(f) = \int  \overline{f(x)} \,  a_x \, dx
\end{align}
and the number of particles operator $\cN = \int a_x^* a_x dx $.  Moreover, we have the modified creation and annihilation operators $b^*(f), b(f)$ which (in contrast to $a^*(f), a(f)$) leave the truncated Fock space $\cF_{\varphi_t}^{\leq N}$ invariant and are given for $f \in L^2_{\perp \varphi_t} ( \bR^3 ) $ by 
\begin{align}
\label{def:b}
b^*(f)  &= \cU_t \; a^*(f) \tfrac{a(\varphi_t)}{\sqrt{N}}\;  \cU_t^* = a^*(f) \sqrt{ 1- \tfrac{\cN_+(t)}{N}} \notag\\
b(f)  &= \cU_t \;  \tfrac{a^*(\varphi_t)}{\sqrt{N}} a(f) \; \cU_t^* =\sqrt{ 1- \tfrac{\cN_+(t)}{N}} a(f) 
\end{align}
where $\cN_+(t) = \cN - a^*(\varphi_t) a(\varphi_t)$ is the number of excitations.  Note that the operators $b^*(f), b(f)$ are time dependent,  yet we omit the time dependence in their notation for simplicity. Their commutators given for $f_1,f_2 \in L^2_{\perp \varphi_t} ( \bR^3 ) $ by
\begin{align}
\left[ b(f_1), b^*(f_2) \right] = \left( 1 - \frac{\mathcal{N}_+(t)}{N} \right) \langle f_1, \; f_2 \rangle - \frac{1}{N}  a^*(f_2) a(f_1) , \quad \left[ b(f_1), b(f_2) \right]  = \left[ b^*(f_1), b^*(f_2) \right] = 0 
\end{align}
behave in the limit $N \rightarrow \infty$ similarly as the standard commutation relations of $a^*(f_1), a(f_2)$; however, the correction terms of order $N^{-1}$ lead to technical difficulties in the proofs below. With \eqref{def:b} and the following further properties of $\cU_t$
\begin{align}
\label{eq:propU}
\cU_t a^*( \varphi_t) a( \varphi_t) \cU_t^* &= N - \cN_+ (t) \quad \notag\\
\cU_t a^*(f) a(g) \cU_t^* &= a^*(f) a(g) 
\end{align}
for $f,g \in L_{\perp \varphi_t }^2 ( \bR^3 )$, we can compute the generator  $\cL_N(t)$ of the fluctuation dynamics
\begin{align}
\label{def:flucdyn}
\cW_N(t_2;t_1) = \cU_{t_2} e^{-i H_N (t_2 -t_1)}\cU_{t_1}^*: \cF_{\varphi_{t_1}}^{\leq N} \rightarrow\cF_{\varphi_{t_2}}^{\leq N} \;
\end{align}
 defined by
\begin{align}
i \partial_{t_2} \cW_N (t_2; t_1) = \cL_N (t_2) \cW_N(t_2;t_1) \,.
\end{align}
For $\xi_1, \xi_2 \in \cF_{\perp \varphi_t}^{\leq N}$ it is given by 
\begin{align}
\langle \xi_1, \cL_N (t) \xi_2 \rangle & = \langle \xi_1, \left[ i \partial_t \cU_t \right] \cU_t^* + \cU_t H_N \cU_t^* \xi_2 \rangle \notag\\
 &= \langle \xi_1, d\Gamma (h_{\rm H} (t) + K_{1,t})  \xi_2 \rangle + \Re  \int  \;  K_{2,t} (x,y) \, \langle \xi_1, b_x^* b_y^* \xi_2 \rangle \; dxdy \notag \\ 
&\quad - \frac{1}{2N} \langle \xi_1 , d\Gamma (v *|\ph_{t}|^2 + K_{1,t} - \mu_{t}) (\cN_+ (t) - 1) \xi_2 \rangle \notag\\
 &\quad + \frac{2}{\sqrt{N}} \Re \, \langle \xi_1, \cN_+(t)  b ((v*|\ph_{t}|^2) \ph_{t}) \xi_2 \rangle\notag \\
  &\quad +\frac{2}{\sqrt{N}} \int  \;  v (x-y)  \Re \,  \ph_{t} (x) \langle \xi_1, a_y^*a_{x}b_{y} \xi_2 \rangle\; dx dy \notag  \\
   &\quad + \frac{1}{2N} \int \,v (x-y) \langle \xi_1 , a_x^* a_y^* a_{x} a_{y} \xi_2 \rangle \,   dx dy \label{eq:form-LN}
\end{align}
where we used the notation introduced in \eqref{eq:hartree}, \eqref{def:K-ohne} and  $2\mu_{t} = \int dx dy \; v(x-y)  \vert \varphi_{t} (x) \vert^2  \vert \varphi_{t} (y) \vert^2 $.  

In the limit of large $N$, the fluctuation dynamics $\cW_N (t_2; t_1)$ can be approximated by a limiting dynamics $\cW_\infty (t_2 ;t_1) : \cF_{\perp \ph_{t_1}}   \to \cF_{\perp \ph_{t_2}} $ which is obtained by taking a formal limit $N\to \infty$ in \eqref{eq:form-LN}. It satisfies the equation 
\begin{align}
\label{eq:Winfty} 
i\partial_t \cW_\infty (t_2 ; t_1 ) = \cL_\infty (t_2) \cW_\infty (t_2;t_1)
\end{align} 
with the generator $\cL_\infty (t)$ whose matrix elements are given for $\xi_1, \xi_2 \in \cF_{\perp \ph_{t}}$ by 
\begin{align}
\langle \xi_1 , \cL_\infty (t) \xi_2 \rangle  = \langle \xi_1, d\Gamma (h_{\rm H} (t) + K_{1,t}) \xi_2 \rangle  + \Re \int K_{2,t} (x;y) \langle \xi_1, a_x^* a_y^* \xi_2 \rangle \; dx dy 
\end{align}  
For more details see \cite{LNS} resp. \cite{C,GMM,H,MPP}. The generator $\cL_\infty (t)$ of the limiting fluctuation dynamics is quadratic in creation and annihilation operators and thus gives rise to a Bogoliubov transformation \cite{BKS,BPPS,R} related to the function $f_{0;t}$ defined in \eqref{eq:def-fst} (see \cite[Theorem 2.2 et seq.]{BKS}).

\subsection{Proof of Theorem \ref{thm:main} } \label{sec:proof-lower bound} 

The proof follows closely the ideas of \cite{KRS} and is based on Baker--Campbell--Hausdorff formulas proved therein (see \cite[Propositions~2.2--2.5]{KRS}.  The main difference compared to \cite{KRS} is, on the one hand, our weaker assumptions on the interaction potential (entering in the estimates \eqref{eq:bound-h1} and \eqref{eq:f-h2} through Lemmas~\ref{lemma:Kj} and~\ref{lemma:f}).  On the other hand, we prove lower bounds in Lemmas~\ref{lemma:step1}--\ref{lemma:step3} as well,  based on similar ideas as for the upper bounds (see also \cite[subsequent discussion of Theorem 1.1]{KRS}). 

With the map $\cU_0$ defined in \eqref{def:U}, we observe that $\ph^{\otimes N} = \cU_0^* \Omega $ and thus by definition of the fluctuation dynamics in \eqref{def:flucdyn}, we have 
\begin{align}
\psi_{N,t} = e^{-iH_N t} \ph^{\otimes N} = e^{-iH_N t} \cU_0^* \Omega = \cU_t^* \cW_N(t;0 ) \Omega . 
\end{align}
Hence we can write the 
moment generating function as 
 \begin{align}
 \bE_{\psi_{N,t}} \left[ e^{\lambda N O_{N,t} } \right] =& \left\langle \psi_{N,t}, \;e^{\lambda N O_{N,t}} \psi_{N,t} \right\rangle \notag\\
 =& \left\langle \Omega, \; \cW_N^*(t;0) \cU_t e^{\lambda d \Gamma ( \widetilde{O}_t ) } \cU_t^* \cW_N (t;0) \Omega \right\rangle  \label{eq:starto}
 \end{align}
 with $\widetilde{O}_t = O - \langle \varphi_t, O \varphi_t \rangle$.  The properties \eqref{def:b}, \eqref{eq:propU} of $\cU_t$ allow to compute
\begin{align}
\cU_t d \Gamma (\widetilde{O}_t ) \cU_t^* = d\Gamma ( q_t \widetilde{O}_t q_t ) + \sqrt{N} \phi_+ ( q_t O \varphi_t ) 
\end{align}
 where we used that $\langle \varphi_t, \widetilde{O}_t \varphi_t \rangle =0$ and we introduced the notation 
 $ \phi_+(h) = b(h) + b^*(h)$ for  $ h \in L^2_{\perp \ph_t} ( \bR^3 )$. Thus, we arrive at 
  \begin{align}
 \bE_{\psi_{N,t}} \left[ e^{\lambda N O_{N,t} } \right] 
 = \left\langle \Omega, \; \cW_N^*(t;0) e^{\lambda d \Gamma ( q_t\wO_t q_t) + \lambda \sqrt{N} \phi_+( q_t O \ph_t )} \cW_N (t;0) \Omega \right\rangle  \; . \label{eq:start}
 \end{align}
As in \cite{KRS}, we split the proof into three steps.  The first step,   Lemma \ref{lemma:step1},  can be proved as \cite[Lemma 3.1]{KRS}. 

\begin{lemma} 
\label{lemma:step1}
There exists a constant $C>0$ such that for all $t \in \bR$ and $\lambda \leq \| O\|^{-1}$
\begin{align}
&e^{-C N \| O \|^3 \lambda^3} \left\langle \Omega, \; \cW_N^*(t;0) e^{\lambda \sqrt{N} \phi_+ (q_t O \ph_t )/2 } e^{-2 \lambda \| O \| \cN_+(t) } e^{\lambda \sqrt{N} \phi_+ (q_t O \ph_t )/2 } \cW_N(t;0) \Omega \right\rangle \nonumber \\
& \leq \left\langle \cW_N^* (t;0) \Omega, \; e^{\lambda d \Gamma ( q_t \wO_t q_t ) + \lambda \sqrt{N} \phi_+ (q_t O \ph_t)} \cW_N(t;0) \Omega\right \rangle \notag\\
&  \leq e^{C N \| O \|^3 \lambda^3} \left\langle \Omega, \; \cW_N^*(t;0) e^{\lambda \sqrt{N} \phi_+ (q_t O \ph_t )/2 } e^{2 \lambda \| O \| \cN_+(t) } e^{\lambda \sqrt{N} \phi_+ (q_t O \ph_t )/2 } \cW_N(t;0) \Omega \right\rangle \,. \label{eq:lemma-step1-upper}
\end{align}
\end{lemma}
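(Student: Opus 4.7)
The plan is to combine a symmetric Baker--Campbell--Hausdorff (BCH) splitting of $e^{X+Y}$ with an elementary operator bound on $d\Gamma(q_t\tilde O_t q_t)$ in terms of $\cN_+(t)$. Setting $X = \lambda\, d\Gamma(q_t \tilde O_t q_t)$, $Y = \lambda \sqrt{N}\, \phi_+(q_t O \varphi_t)$, and $\xi = \cW_N(t;0)\Omega$, the central quantity of the lemma reads $\langle \xi, e^{X+Y}\xi\rangle$.

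\medskip

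\textbf{Step 1 (BCH splitting).} I would first establish
\begin{align*}
e^{-CN\|O\|^3 \lambda^3}\langle \xi, e^{Y/2} e^{X} e^{Y/2}\xi\rangle \leq \langle \xi, e^{X+Y}\xi\rangle \leq e^{CN\|O\|^3 \lambda^3}\langle \xi, e^{Y/2} e^{X} e^{Y/2}\xi\rangle.
\end{align*}
Following \cite[Propositions 2.2--2.5]{KRS}, this is done by Duhamel interpolation: one compares $e^{s(X+Y)}\xi$ with $e^{sY/2}e^{sX}e^{sY/2}\xi$ for $s\in[0,1]$ so that their difference reduces to iterated commutators $[X,Y]$, $[Y,[X,Y]]$, $[X,[X,Y]]$ conjugated by the interpolating flows. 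Direct calculation using the modified commutation relations in \eqref{def:b} with $A := q_t \tilde O_t q_t$ and $h := q_t O \varphi_t$ yields $[X,Y] = \lambda^2\sqrt{N}\,(b^\ast(Ah) - b(Ah))$, $[X,[X,Y]] = \lambda^3\sqrt{N}\,\phi_+(A^2 h)$, and $[Y,[X,Y]] = 2\lambda^3 N\,\langle h, Ah\rangle(1-\cN_+(t)/N) + \lambda^3 R$, where the remainder $R$ is bounded in operator norm on $\cF_{\varphi_t}^{\leq N}$ by $C\|O\|^3\cN_+(t)$. Using $\|A\|\leq 2\|O\|$, $\|h\|\leq\|O\|$, and $\cN_+(t)\leq N$, each double commutator has operator norm at most $C\lambda^3 N\|O\|^3$ on the truncated Fock space. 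Integration of the Duhamel identity then produces the claimed splitting estimate; the cubic (rather than quadratic) order arises because the symmetric positioning of $Y/2$ on both sides cancels the contribution of $[X,Y]$ in the BCH expansion.

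\medskip

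\textbf{Step 2 (Operator inequality).} The bound $\|q_t \tilde O_t q_t\|\leq\|\tilde O_t\|\leq 2\|O\|$ together with the standard inequality $\pm d\Gamma(A)\leq\|A\|\,\cN_+(t)$ on $\cF_{\varphi_t}^{\leq N}$ gives $-2\|O\|\cN_+(t)\leq d\Gamma(q_t \tilde O_t q_t)\leq 2\|O\|\cN_+(t)$. Since both operators are number-preserving and hence commute, functional calculus yields $e^{-2\lambda\|O\|\cN_+(t)}\leq e^{X}\leq e^{2\lambda\|O\|\cN_+(t)}$. Sandwiching between the self-adjoint operator $e^{Y/2}$ (bounded on the truncated Fock space, since $\|Y\|_{\cF_{\varphi_t}^{\leq N}}\lesssim \lambda N \|O\|$) on both sides preserves the inequalities, and taking the matrix element against $\xi$ yields the two bounds of the lemma.

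\medskip

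The main obstacle is Step 1. Naively, because $Y$ carries a factor $\sqrt{N}$, a $k$-fold iterated commutator would scale as $N^{k/2}$, which at cubic order would give $N^{3/2}\lambda^3$ and ruin the estimate. Two cancellations save the day: first, $[Y,Y]=0$ identically, so no self-commutator of $Y$ alone contributes; second, the modified commutator $[b(f),b^\ast(g)] = \langle f,g\rangle(1-\cN_+/N) - a^\ast(g)a(f)/N$ has a scalar leading term, with operator-valued corrections suppressed by explicit factors of $1/N$ (see \eqref{def:b}). A careful Wick ordering of the third-order iterated commutators shows that only one net factor of $\sqrt{N}$ survives, yielding the claimed prefactor $CN\lambda^3\|O\|^3$. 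The hypothesis $\lambda\|O\|\leq 1$ ensures that higher-order remainders in the Duhamel expansion are uniformly dominated by the cubic term, so the argument closes.
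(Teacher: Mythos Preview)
Your two-step strategy is sound and uses the same ingredients as the paper, but the organization differs. The paper does not pass through the intermediate quantity $\langle \xi, e^{Y/2} e^{X} e^{Y/2}\xi\rangle$; instead it defines a single interpolation
\[
\xi_s = e^{-(1-s)\lambda\kappa\cN_+(t)/2}\, e^{(1-s)Y/2}\, e^{s(X+Y)/2}\,\cW_N(t;0)\Omega
\]
which connects $\langle\xi,e^{X+Y}\xi\rangle$ directly to $\langle\xi,e^{Y/2}e^{\mp\lambda\kappa\cN_+}e^{Y/2}\xi\rangle$. The generator $\cM_s$ then contains a surviving $d\Gamma(q_t\wO_t q_t)$ term, and the choice $\kappa=2\|O\|$ is made precisely so that $d\Gamma(q_t\wO_t q_t)+\kappa\cN_+(t)\ge 0$ (your Step~2 inequality) absorbs it inside the Gronwall estimate. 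Your route instead uses the interpolation $\zeta_s=e^{(1-s)X/2}e^{(1-s)Y/2}e^{s(X+Y)/2}\xi$, in which the $d\Gamma$ terms cancel and only $O(N\lambda^3\|O\|^3)$ corrections remain, and then applies the operator inequality $e^{-2\lambda\|O\|\cN_+}\le e^X\le e^{2\lambda\|O\|\cN_+}$ separately. Both arguments rely on the exact conjugation formulas from \cite[Propositions~2.2--2.4]{KRS} and a Gronwall bound; the paper's version is marginally more economical since it avoids the intermediate expression.

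One caveat on your exposition of Step~1: phrasing the argument in terms of ``iterated commutators'' and ``higher-order remainders in the Duhamel expansion'' is misleading, because $\|X\|$ and $\|Y\|$ are both of order $\lambda N\|O\|$ on $\cF_{\varphi_t}^{\leq N}$, so a formal BCH series does not converge. What is actually needed (and what the propositions you cite provide) is the \emph{exact} closed-form expression for $e^{\alpha Y}d\Gamma(A)e^{-\alpha Y}$ and for $e^{\alpha X}b^{\sharp}(g)e^{-\alpha X}=b^{\sharp}(e^{\pm\alpha\lambda A}g)$; these show directly that the self-adjoint part of the derivative generator is bounded by $CN\lambda^3\|O\|^3$ once $\lambda\|O\|\le 1$, without any truncation of a commutator series. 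With that clarification, your proof goes through.
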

 
 \begin{proof}
The proof of the upper bound in \eqref{eq:lemma-step1-upper} is the same as  in \cite[Lemma 3.1]{KRS}.  The lower bound can be proved in essentially the same way. For completeness we carry it out in the following. 
 As in \cite{KRS} (but replacing $\kappa$ with $- \kappa$), we define for $s \in [0,1]$ and  $\kappa> 0$  the vector 
\begin{align}
\label{eq:xi-1}
\xi_s = e^{-(1-s)\lambda \kappa \cN_+ (t) /2} e^{(1-s) \lambda \sqrt{N} \phi_+ (q_t O \ph_t)/2} e^{s \lambda \left[ d \Gamma (q_t \wO_t q_t ) + \sqrt{N} \phi_+(q_t O \ph_t) \right]/2} \cW_N (t;0) \Omega .
\end{align}
We have 
\begin{align}
\label{eq:xi-10}
\| \xi_0 \|^2 =\left\langle \Omega,  \; \cW_N^*(t;0) e^{\lambda \sqrt{N} \phi_+ (q_t O \ph_t )/2 } e^{-\kappa \lambda  \cN_+(t) } e^{\lambda \sqrt{N} \phi_+ (q_t O \ph_t )/2 } \cW_N(t;0) \Omega \right\rangle
\end{align}
and 
\begin{align}
\label{eq:xi-11}
\| \xi_1 \|^2 =  \left\langle \Omega, \; \cW_N^* (t;0) e^{\lambda d \Gamma ( q_t \wO_t q_t ) + \lambda \sqrt{N} \phi_+ (q_t O \ph_t)} \cW_N(t;0) \Omega\right \rangle .
\end{align}
To control the difference of \eqref{eq:xi-10} and \eqref{eq:xi-11}, we compute the derivative
\begin{align}
\label{eq:Re-1}
\partial_s \| \xi_s \|^2 = 2 \Re \langle \xi_s, \,  \partial_s \xi_s \rangle = 2 \Re \langle \xi_s, \,  \cM_s \xi_s \rangle
\end{align}
where the operator $\cM_s$ is given by
\begin{align}
\label{eq:def-M}
\cM_s =& \frac{\lambda}{2} e^{-(1-s)\lambda \kappa \cN_+ (t)/2} e^{(1-s)\lambda \sqrt{N} \phi_+ (q_t O \varphi_t )/2} d \Gamma (q_t \wO_t q_t ) e^{-(1-s)\lambda \sqrt{N} \phi_+ (q_t O \varphi_t )/2} e^{(1-s)\lambda \kappa \cN_+ (t)/2}  \notag\\
&+ \frac{\lambda \kappa}{2} \cN_+ (t) .
\end{align} 
With \cite[Propositions~2.2--2.4]{KRS}, we can compute $\cM_s$ explicitly. Note that only the hermitian part of $\cM_s$ enters in \eqref{eq:Re-1}. Using the notation $h_t = (1-s) \lambda q_t O \ph_t$  and $\gamma_s = \cosh s, \sigma_s = \sinh s$, we find
 \begin{align}
 \label{eq:M1}
\frac{\cM_s + \cM_s^*}{\lambda}  & = d \Gamma (q_t \wO_t q_t) -  \frac{\sigma_{\| h_t \|}^2}{\| h_t \|^2} \langle h_t , \wO_t  h_t \rangle \left(N - {\cN_+ (t)} \right) + \left( \frac{\gamma_{\| h_t \|} - 1}{\| h_t \|^2} \right)^2  \langle h_t , \wO_t  h_t \rangle a^* (h_t) a (h_t)\notag \\
 &\quad + \frac{\gamma_{\| h_t \|} -1}{\| h_t \|^2} (a^* (h_t) a (q_t \wO_t h_t) + a^* (q_t \wO_t h_t) a (h_t) ) + \kappa \cN_+ (t)\notag \\
   &\quad + \sqrt{N} \,  \frac{\sigma_{\| h_t \|}}{\| h_t \|}
  \sinh ((s-1) \lambda \kappa /2)  \left[ \frac{\gamma_{\| h_t \|} - 1}{\| h_t \|^2} \langle h_t , \wO_t h_t \rangle \phi_+ (h_t)  + \phi_+ (q_t \wO_t h_t) \right]   \,.
 \end{align}
 
For any $h \in L^2_{\perp \ph} ( \bR^3 )$ and any bounded operator $H$ on $L^2_{\perp \ph} ( \mathbb{R}^3 )$, we have the bounds
\begin{align}
\label{eq:bounds-b}
\| b(h) \xi \| \leq \| h \|_2 \| \cN_+(t)^{1/2} \xi \|, \quad \| b^*(h) \xi \| \leq \| h\|_2 \| \left( \cN_+(t) +1 \right)^{1/2} \xi \|,\quad \pm d\Gamma (H) \leq \| H \| \cN_+ (t) \,.
\end{align}
Consequently, all  terms on the r.h.s. of \eqref{eq:M1} can be bounded by a constant of order $N$.  Furthermore, since 
\begin{align}
\| \widetilde{O}_t \| \leq \| O \| ( 1 + \| \varphi_t \|_2^2 ) =2  \| O \|
\end{align} 
we can bound $d \Gamma (q_t \wO_t q_t) \geq - 2 \|O\| \cN_+(t)$ and hence the choice $\kappa = 2 \|O\|$ gives $d \Gamma (q_t \wO_t q_t) + \kappa \cN_+(t) \geq 0$. 
Moreover, since 
\begin{align}
\| h_t \|_2 \leq \lambda \| q_t O \varphi_t\|_2 \leq  \lambda \| O \|  \,  \|\varphi_t\|_2\leq 1
\end{align} 
for all  $\lambda \leq \| O \|^{-1}$, all the other terms on the r.h.s. of \eqref{eq:M1} are at least of order $\lambda^2$.  Thus, using \eqref{eq:bounds-b}  and  $\kappa = 2 \|O\|$  we obtain the  lower bound 
 \begin{align}
 \label{eq:lower-1}
\frac{2}{\lambda} \text{Re } \langle \xi_{s} , \cM_{s} \xi_{s} \rangle  \geq 
- C \lambda^2 N \| O \|^3   \| \xi_{s} \|^2 \,.
 \end{align}
 In combination with \eqref{eq:Re-1} the lower bound in \eqref{eq:lemma-step1-upper} now follows from 
 Gronwall's inequality. 

The proof of the upper bound in \cite{KRS} works in the same way, simply replacing $\kappa$ by $-\kappa$ and estimating the terms in \eqref{eq:M1} from above instead of from below. 
\end{proof}

The second step, Lemma \ref{lemma:step2}, is a generalization of Lemma \cite[Lemma 3.2]{KRS} to more singular interaction potentials.  The proof involves Lemmas~\ref{lemma:Kj} and~\ref{lemma:f} (see in particular \eqref{eq:bound-h1} and \eqref{eq:f-h2}) for the estimates yielding to an double exponential in time of the term cubic in $\lambda$ and in the definition of $\kappa_s$(compared to an exponential in time in  \cite{KRS}). We remark that for Lemma \ref{lemma:step2} it is a crucial observation that $f_{s;t} \in L^2_{\perp \varphi_s} ( \bR^3)$ for all $0\leq s \leq t$. This follows from the fact that $\langle \varphi_t, f_{t;t} \rangle = \langle \varphi_t , q_t O \varphi_t\rangle=0$ by construction, as well as 
\begin{align}
\partial_s \langle \varphi_s, f_{s;t} \rangle =   - i \,  \Im\, \langle \varphi_s, \left[  \widetilde{K}_{1,s} - \widetilde{ K}_{2,s} J  \right] f_{s;t }\rangle =0 
\end{align}
using the definitions \eqref{eq:def-fst} and \eqref{def:K}.

 \begin{lemma}
 \label{lemma:step2}
 For $0\leq s \leq t$,  let $f_{s;t} \in L^2_{\perp \ph_s} \left( \bR^3 \right)$ be defined by \eqref{eq:def-fst}. Let $O$ be a self-adjoint operator on $L^2 \left( \bR^3 \right) $ such that $\vertiii{O} < \infty$ as defined in \eqref{def:norm-O}. There exists a constant $C>0$ such that for $\kappa$ defined as
\begin{align}
\label{eq:kappas}
\kappa = C  \vertiii{O}  e^{e^{C t}}  
\end{align}
we have for all $0\leq \lambda\kappa \leq 1$
\begin{align}
 &\left\langle \Omega, \cW_N (t;0) e^{\lambda \sqrt{N} \phi_+  (q_t O \ph_t)/2} e^{2 \lambda \|{O} \| \cN_+ (t)} e^{\lambda \sqrt{N} \phi_+ (q_t O \ph_t)/2} \cW_N (t;0) \Omega \right\rangle  \notag\\
  &\hspace{1cm} \leq 
e^{ \kappa \left( N \lambda^3 \vertiii{O}^2+\lambda\right) }  \left\langle \Omega, e^{\lambda \sqrt{N} \phi_+ (f_{0;t})/2}  e^{\lambda \kappa \cN_+ (0)} e^{\lambda \sqrt{N} \phi_+ (f_{0;t})/2}  \Omega \right\rangle \; \label{eq:lemma-step2-upper}
\end{align}
and 
\begin{align}
 &\left\langle \Omega, \cW_N (t;0) e^{\lambda \sqrt{N} \phi_+  (q_t O \ph_t)/2} e^{-2 \lambda \| O \| \cN_+ (t)} e^{\lambda \sqrt{N} \phi_+ (q_t O \ph_t)/2} \cW_N (t;0) \Omega \right\rangle  \notag\\
  &\hspace{1cm} \geq 
e^{ -\kappa \left( N \lambda^3\vertiii{O}^2  + \lambda\right) }  \left\langle \Omega, e^{\lambda \sqrt{N} \phi_+ (f_{0;t})/2}  e^{- \lambda \kappa \cN_+ (0)} e^{\lambda \sqrt{N} \phi_+ (f_{0;t})/2}  \Omega \right\rangle \; . \label{eq:lemma-step2-lower}
\end{align} 
\end{lemma}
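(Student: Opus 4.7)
The plan is to interpolate the left hand side of \eqref{eq:lemma-step2-upper} back from $s=t$ to $s=0$ along the fluctuation dynamics. Introduce
\begin{align*}
\Phi_s^+ = \big\langle \Omega,\; \cW_N(s;0)^*\, e^{\lambda\sqrt N\,\phi_+(f_{s;t})/2}\, e^{\kappa_s\cN_+(s)}\, e^{\lambda\sqrt N\,\phi_+(f_{s;t})/2}\, \cW_N(s;0)\,\Omega\big\rangle,
\end{align*}
with $\kappa_s\geq 0$ a differentiable function satisfying the boundary condition $\kappa_t=2\lambda\|O\|$. Since $f_{t;t}=q_tO\varphi_t$ and $\cW_N(t;t)$ is the identity, $\Phi_t^+$ coincides with the left hand side of \eqref{eq:lemma-step2-upper}. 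The goal will be to establish $\Phi_t^+\leq e^{\kappa(N\lambda^3\vertiii{O}^2+\lambda)}\Phi_0^+$ with $\kappa_0=\lambda\kappa$ and $\kappa$ as in \eqref{eq:kappas}; the lower bound \eqref{eq:lemma-step2-lower} is then obtained by rerunning the argument with $-\kappa_s$ in place of $\kappa_s$.

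\textbf{Differentiation and the key cancellation.} Using $i\partial_s\cW_N(s;0)=\cL_N(s)\cW_N(s;0)$ one computes
\begin{align*}
\partial_s\Phi_s^+ = \big\langle \cW_N(s;0)\Omega,\; \big( i[\cL_N(s),\mathcal{A}_s] + \partial_s\mathcal{A}_s \big)\,\cW_N(s;0)\Omega \big\rangle,
\end{align*}
where $\mathcal{A}_s=e^{\lambda\sqrt N\,\phi_+(f_{s;t})/2}e^{\kappa_s\cN_+(s)}e^{\lambda\sqrt N\,\phi_+(f_{s;t})/2}$. The next step is to conjugate $\cN_+(s)$ and $\cL_N(s)$ through the Weyl-type exponentials $e^{\lambda\sqrt N\phi_+(f_{s;t})/2}$ using the Baker--Campbell--Hausdorff identities from \cite[Propositions 2.2--2.5]{KRS}, producing a sum of operators of degree $0,1,2$ in $b,b^*$ plus lower-order corrections of size $\cN_+(s)/N$. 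The key cancellation is built into the definition of $f_{s;t}$: the one-particle operator $h_{\rm H}(s)+\widetilde K_{1,s}-\widetilde K_{2,s}J$ generating \eqref{eq:def-fst} is precisely the operator that arises when commuting $\phi_+(f_{s;t})$ with the quadratic part $d\Gamma(h_{\rm H}(s)+K_{1,s})+\Re\int K_{2,s}(x,y)\,b_x^*b_y^*\,dxdy$ of $\cL_N(s)$. Hence the contribution from the quadratic part of $\cL_N(s)$ cancels exactly with $\partial_s\phi_+(f_{s;t})$.

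\textbf{Estimating the residual terms.} What remains is controlled by three classes of contributions: (i) the cubic and quartic pieces of $\cL_N(s)$ in \eqref{eq:form-LN} carrying prefactors $N^{-1/2}$ and $N^{-1}$; (ii) the $\cN_+(s)/N$ corrections from the commutation relations of $b,b^*$ and from the BCH conjugation identities; (iii) the commutator of $\kappa_s\cN_+(s)$ with the pair-creation/annihilation piece of the quadratic part of $\cL_N(s)$, together with $\dot\kappa_s\cN_+(s)$. For (i) and (ii) we use the a priori bounds \eqref{eq:bounds-b} together with Lemma \ref{lemma:Kj} and the bound $\|f_{s;t}\|_{H^2}\leq Ce^{e^{Cs}}\vertiii{O}$ from Lemma \ref{lemma:f}; this is exactly the point where the weaker hypothesis on $v$ enters and where the single exponential of \cite{KRS} is replaced by a double exponential. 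The resulting contributions to $\partial_s\log\Phi_s^+$ are of size $\lambda^3N\vertiii{O}^3e^{e^{Cs}}$, which after integration in $s\in[0,t]$ produce the $\kappa\cdot N\lambda^3\vertiii{O}^2$ term in \eqref{eq:lemma-step2-upper}. For (iii), the pair term generates contributions proportional to $\lambda\kappa_s\|K_{2,s}\|$ times $\cN_+(s)$-expectations; we absorb these by choosing $\kappa_s$ to solve (backward in $s$, from $\kappa_t=2\lambda\|O\|$) a linear scalar ODE of the form $-\dot\kappa_s\leq C\vertiii{O}e^{Cs}\kappa_s+C\lambda^2N\vertiii{O}^3e^{e^{Cs}}$, whose solution satisfies $\kappa_0\leq\lambda\kappa$ with $\kappa=C\vertiii{O}e^{e^{Ct}}$.

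\textbf{Conclusion and main obstacle.} Once these bounds are in place, Gronwall's inequality applied to $s\mapsto\log\Phi_s^+$ on $[0,t]$ yields \eqref{eq:lemma-step2-upper}, and the same interpolation with $-\kappa_s$ gives \eqref{eq:lemma-step2-lower} (the sign change only reverses the direction of the ODE inequality in (iii)). The \emph{main obstacle} is the bookkeeping after the BCH expansions: one must carefully track at each step how many factors of $\sqrt N$ and how many $\cN_+(s)/N$ corrections are produced, and then verify that every surviving term is either cancelled by the evolution equation of $f_{s;t}$, absorbed into the ODE for $\kappa_s$, or bounded by an integrable-in-$s$ multiple of $\lambda^3N\vertiii{O}^2$.
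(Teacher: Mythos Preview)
Your overall strategy matches the paper's exactly: define $\xi_t(s)=e^{\mp\lambda\kappa_s\cN_+(s)/2}e^{\lambda\sqrt N\phi_+(f_{s;t})/2}\cW_N(s;0)\Omega$, differentiate in $s$, conjugate $\cL_N(s)$ through the two exponentials via the BCH identities of \cite{KRS}, use the defining equation of $f_{s;t}$ to cancel the leading $\sqrt N\,\phi_-$ term, choose $\kappa_s$ to eliminate the $\cN_+(s)$ contributions, and Gronwall.

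There is, however, a concrete error in your bookkeeping that would break the argument as written. Your classes (i)--(ii) (the cubic/quartic pieces of $\cL_N(s)$ and the $\cN_+/N$ corrections) do \emph{not} produce only bounded terms of size $N\lambda^3\vertiii{O}^3e^{e^{Cs}}$. After conjugation each of these splits into a symmetric part $S_j$ (harmless), a norm-bounded part $T_j$ with $\|T_j\|\leq Ce^{e^{Ct}}N\lambda^3\vertiii{O}^3$, \emph{and} an antisymmetric piece $R_j$ with $\pm R_j\leq C\lambda(\kappa_s+\vertiii{O}e^{e^{Ct}})(\cN_+(s)+1)$. These $R_j$'s, not just the pair-term commutator of your class (iii), feed the ODE for $\kappa_s$. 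Consequently the correct ODE (in your normalization with $\lambda$ absorbed into $\kappa_s$) is
\[
-\dot\kappa_s \;=\; C\,\kappa_s \;+\; C\,\lambda\,\vertiii{O}\,e^{e^{Ct}},\qquad \kappa_t=2\lambda\|O\|,
\]
whose inhomogeneity is of order $\lambda$, \emph{not} $N\lambda^2$. The ODE you wrote, with inhomogeneity $C\lambda^2 N\vertiii{O}^3e^{e^{Cs}}$, would give $\kappa_0$ of order $N\lambda^2$, which diverges as $N\to\infty$ and cannot equal $\lambda\kappa$ for an $N$-independent $\kappa$ as claimed in the statement. Also note that the ``$+1$'' in the bound $(\cN_+(s)+1)$ is precisely what generates the additive $+\lambda$ in the exponent $\kappa(N\lambda^3\vertiii{O}^2+\lambda)$; this term is missing from your accounting.
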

 
 \begin{proof}
The lower bound \eqref{eq:lemma-step2-lower} follows with ideas from \cite[Lemma 3.2]{KRS} and from Lemmas~\ref{lemma:Kj} and~\ref{lemma:f}. For $0\leq s \leq t$ and some (differentiable) $\kappa_s \geq 0$ with $\kappa_t = 2 \|O\|$, 
define the vector 
\begin{align}
\label{eq:xi-2}
\xi_t (s) = e^{-\lambda \kappa_{s} \cN_+ (s) /2} e^{\lambda \sqrt{N} \phi_+ (f_{s;t}) /2} \cW_N (s;0) \Omega \in \cF_{\perp \ph_s}^{\leq N} \,.
\end{align}
It satisfies 
\begin{align}
 \| \xi_t (0) \|^2 = \langle \Omega,  e^{\lambda \sqrt{N} \phi_+ (f_{0;t}) /2} e^{ -  \lambda \kappa_0 \cN_+ (0)} e^{\lambda \sqrt{N} \phi_+ (f_{0;t}) /2} \Omega \rangle\label{eq:xi-20}
 \end{align}
and 
\begin{align} \| \xi_t (t) \|^2 = \left\langle \Omega,  \cW_N (t;0)^* e^{\lambda \sqrt{N} \phi_+ (q_t O \ph_t) /2} e^{-2 \lambda \|O\| \cN_+ (t)} e^{\lambda \sqrt{N} \phi_+ (q_t O \ph_t) /2} \cW_N (t;0) \Omega \right\rangle \, .  \label{eq:xi-2t}
\end{align}
Note that the definition \eqref{eq:xi-2} is similar to the vector defined at the beginning of the proof of \cite[Lemma 3.2]{KRS}. The crucial difference is that, here, in \eqref{eq:xi-2}, for the lower bound,  the exponential of the number of particles operator comes with a negative constant in front (in contrast to a positive one in \cite{KRS} for the upper bound).

As in the proof of Lemma \ref{lemma:step1},  we want to control the difference of  \eqref{eq:xi-20} and \eqref{eq:xi-2t} through the derivative 
\begin{align}
\label{eq:J} 
\partial_s \| \xi_t (s) \|^2 = -2i  \, \Im \left\langle \xi_t (s) , \cJ_{N,t} (s) \xi_t (s) \right\rangle
 \end{align} 
where $\cJ_N(s)$ is (in the sense of a quadratic form on $\cF_{\perp \varphi_s}^{\leq N}$ as in \eqref{eq:form-LN}) given by 
\begin{align}
\label{eq:gen}
 \cJ_{N,t} (s) = \; & e^{-\lambda \kappa_{s} \cN_+  (s)/ 2} e^{\lambda \sqrt{N} \phi_+ (f_{s;t}) /2} \cL_N (s) e^{-\lambda \sqrt{N} \phi_+ (f_{s;t}) /2} e^{\lambda \kappa_{s} \cN_+ (s) / 2} \notag\\
 &+ e^{-\lambda \kappa_{s} \, \cN_+ (s) / 2} \left[ i \partial_s e^{\lambda \sqrt{N} \phi_+ (f_{s;t}) /2} \right] e^{-\lambda \sqrt{N} \phi_+ (f_{s;t}) /2} e^{\lambda \kappa_{s} \, \cN_+  (s) /2} -\frac{i\lambda}{2} {\dot\kappa_s} \, \cN_+ (s) \,,
 \end{align}
 where we denote $\dot \kappa_s = d\kappa_s/ds$. 
For this computation it is convenient to embed $\cF_{\perp \varphi_s}^{\leq N}$ into the full Fock space $\cF$ in which case $\cN_+(s)$ can be replaced by the $s$-independent $\cN$ (for more details see the discussion before \cite[Eq.~(3.3)]{KRS}).  We proceed as in \cite{KRS} and compute the anti-symmetric part of $\cJ_{N,t }(s)$ explicitly with the help of \cite[Propositions~2.2--2.4]{KRS}, and show that its norm is bounded by terms of order $N \lambda^3 $  and $\lambda$.  

To this end, recalling the definition of $\cL_N(s)$ in \eqref{eq:form-LN} and analogous calculations as in \cite[(3.4)--(3.5)]{KRS}, we have 
\begin{align}
&  e^{-\lambda \kappa_{s} \cN_+ (s)/ 2}  e^{\lambda \sqrt{N} \phi_+ (f_{s;t}) /2} d\Gamma (h_{\rm H} (s) + K_{1,s}) e^{-\lambda \sqrt{N} \phi_+ (f_{s;t}) /2} e^{\lambda \kappa_{s} \cN_+ (s) / 2}\notag \\ 
 & = \frac{i \lambda \sqrt{N}}{2}  \phi_- ((h_{\rm H} (s) + K_{1,s}) f_{s;t}) + T_1 + S_1  \label{eq:LN-1}
\end{align}  
where we introduced the notation $\phi_-(f) = -i (b(f)-b^*(f))$, $S_1 = S_1^*$ is symmetric and 
\begin{align}
T_1 & = i \sqrt{N} \left( \frac{\sigma_{\| h_{s;t} \|}}{\| h_{s;t} \|} \cosh (\lambda \kappa_{s} /2)  -1 \right) \phi_- ((h_{\rm H} (s) + K_{1,s}) h_{s;t})\notag  \\
 & \quad + i \sqrt{N} \frac{\sigma_{\| h_{s;t} \|}}{\| h_{s;t} \|} \frac{\gamma_{\| h_{s;t} \|} - 1}{\| h_{s;t} \|^2} \langle  h_{s;t} , (h_{\rm H} (s) + K_{1,s}) h_{s;t} \rangle \cosh (\lambda \kappa_{s} /2)  \phi_- (h_{s;t}) 
\end{align}
with $h_{s;t} = \lambda f_{s;t}/2$, as well as $\gamma_s = \cosh s$ and $\sigma_s = \sinh s$ as in the proof of Lemma~\ref{lemma:step1}. We use the bounds \eqref{eq:bounds-b} and 
\begin{align}
\label{eq:bound-h1}
\| h_{\rm H} (s) h_{s;t} \|_2 \leq C \| h_{s;t} \|_{H^2 \left( \bR^3 \right)} \leq C  \lambda \vertiii{ O } e^{ e^{C t}}  , \quad 
\| K_{1,s} h_{s;t} \|_2 \leq C \lambda \| O \|e^{ C t} 
\end{align}
for all $0\leq s\leq t$ by Lemmas~\ref{lemma:Kj} and~\ref{lemma:f}, and conclude that for all $0 \leq \lambda\kappa_s \leq 1$ and $s \in [0,t]$, we have 
$\| T_1 \| \leq  C e^{e^{C t}}  N \vertiii{O}^3  \lambda^3$.  

We proceed similarly with the remaining terms of \eqref{eq:gen}.  For the second term of $\cL_{N,t}(s)$ in \eqref{eq:form-LN}, we find with analogous calculations as the ones leading to \cite[Eq.~(3.6)]{KRS}, 
using that 
\begin{align}
\| K_{2,s} \| \leq \| K_{2,s} \|_{L^2( \bR^3 \times \bR^3)} \leq C, \quad \| f_{s;t} \|_2 \leq e^{C t } \| O \|
\end{align}
for  $0\leq s\leq t$ by Lemmas~\ref{lemma:Kj} and~\ref{lemma:f}, 
\begin{align}
\label{eq:LN-2} 
&   e^{-\lambda \kappa_{s} \cN_+ (s) / 2}  e^{\sqrt{N} \phi_+ (h_{s;t}) } \left( \frac{1}{2} \int \left[ \overline{K_{2,s} (x,y)} b_x b_y  + K_{2,s} (x,y) b_x^* b_y^* \right]  dx dy \right)  e^{- \sqrt{N} \phi_+ (h_{s;t}) } e^{\lambda \kappa_{s} \cN_+  (s) / 2}\notag \\ 
& = - \frac{i \lambda \sqrt{N}}{2} \phi_- (K_{2,s}\overline{ f_{s;t}}) + S_2 + T_2 + i R_2 
\end{align} 
where $S_2=S_2^*$ is symmetric, $T_2$ is bounded as $T_1$ above by $\| T_2 \| \leq  C e^{e^{C t}}  N \vertiii{O}^3  \lambda^3$, and $R_2$ contains all the remaining terms of order $\lambda$, which are given by
\begin{align}
R_2 & = -i \frac {\lambda  \kappa_{s}}2  \int \left[ \overline{K_{2,s} (x,y)} b_x b_y  - K_{2,s} (x,y) b_x^* b_y^* \right]  dx dy \notag \\ & 
\quad + i \frac {\lambda  \sqrt{N}}2 \left[ \left( 1 - \frac{\cN_+(s)+1/2}{N}\right) b (K_{2,s} \overline{f_{s;t}})  -b^* (K_{2,s} \overline{f_{s;t}})\left( 1 - \frac{\cN_+(s)+1/2}{N}\right)\right] \notag \\ &
\quad -i \frac \lambda {4\sqrt{N}}  \int \left[ \overline{K_{2,s} (x,y)} b^*(f_{s;t}) a_x a_y  - K_{2,s} (x,y) a_y^* a_x^* b(f_{s;t}) \right]  dx dy \notag \\ & 
\quad -i \frac \lambda {4\sqrt{N}}  \int \left[ \overline{K_{2,s} (x,y)} a^*(f_{s;t}) a_x b_y  - K_{2,s} (x,y) b_y^* a_x^* a(f_{s;t}) \right]  dx dy \,.
\end{align}
The bounds in  Lemmas~\ref{lemma:Kj} and~\ref{lemma:f} imply that\footnote{The corresponding bound in \cite{KRS} is incorrectly claimed with $\cN_+(s)$ instead of $\cN_+(s)+1$ on the r.h.s., resulting in a missing error term of order $\lambda$ which is independent of $N$, however, and hence irrelevant for $N\to \infty$. The same applies to the corresponding bounds on $R_3$, $R_4$ and $R_5$ below.}
\begin{align}
 R_2 \geq - C \lambda  \left( \kappa_s + \|O\| e^{Ct} \right)   \left( \cN_+ (s) + 1 \right)
\end{align}
for $0\leq s \leq t$. 

For the third term of  \eqref{eq:form-LN}, we proceed as in \cite[Eq.~(3.7)]{KRS} and use 
\begin{align}
\| K_{1,s}\| \leq C, \quad \| v* \vert \varphi_s \vert^2 \|_\infty \leq C , \quad \| f_{s;t}\|_2 \leq  e^{ C t } \| O \|
\end{align}
 from Lemmas~\ref{lemma:Kj} and~\ref{lemma:f} to conclude that 
\begin{align}
\notag
& e^{-\lambda \kappa_{s} \cN_+ (s) / 2}  e^{ \sqrt{N} \phi_+ (h_{s;t}) } 
 d\Gamma (v *|\ph_{s}|^2 + K_{1,s} - \mu_{s}) \frac{ \cN_+ (s) - 1}{2N}  
e^{- \sqrt{N} \phi_+ (h_{s;t}) } e^{\lambda \kappa_{s} \cN_+ (s)  / 2} \\ & =  S_3 + T_3 + i R_3\label{eq:LN-3} 
 \end{align} 
where $S_3 = S_3^*$ is symmetric, $\| T_3 \| \leq  C e^{e^{C t}}  N \vertiii{O}^3  \lambda^3$  and 
\begin{align}
  R_3 \geq - C e^{C  t } \| O \| \lambda (\cN_+ (s)+1) \,.
\end{align} 

For the forth term on the r.h.s. of \eqref{eq:form-LN}, we have with ${d}_s = \left( v * \vert \varphi_s \vert^2 \right) \varphi_s $
\begin{align}
& \frac 1{\sqrt{N}} e^{-\lambda \kappa_{s} \cN_+ (s) / 2}  e^{ \sqrt{N} \phi_+ (h_{s;t})} \left( \cN_+ (s)\, b ( {d}_s )+ b^* ( {d}_s )\, \cN_+ (s)  \right) e^{- \sqrt{N} \phi_+ (h_{s;t}) } e^{\lambda \kappa_{s} \cN_+ (s) / 2}\notag \\ 
 & =  S_4 + T_4 + i R_4 \label{eq:LN-4}
\end{align}
where $S_4$ and $R_4$ are symmetric and $\| T_4 \| \leq C N ( e^{C t} \| O \| + \kappa_{s})^3 \lambda^3$ for all $0\leq s  \leq t$ and $\lambda\kappa_s \leq 1$. The term $R_4$ equals   
\begin{align}
R_4  &  = \frac{ \lambda \kappa_{s}}{2 i \sqrt{N}} \left( \cN_+(s)  b({d}_s)  - b^*({d}_s) \cN_+(s) \right)  + i  \cN_+(s) ( \Im \langle {d}_s, h_{s;t} \rangle \left( 1 - \cN_+(s)/N \right)  \notag\\
 &\quad - i  \frac{\cN_+(s) }{N} \left( a^*(h_{s;t}) a ( {d}_s) - a^* ( {d}_s) a(h_{s;t})\right) +   \phi_-(h_{s;t} ) b({d}_s) +b^* ( {d}_s) \phi_-(h_{s;t} )  \,.
\end{align}
Since 
\begin{align}
\| {d}_s \|_2 \leq  \| v* \vert \varphi_s \vert^2 \|_\infty \| \varphi_s \|_2 \leq C , \quad \| f_{s;t} \|_2 \leq  e^{C t} \| O \|
\end{align}
by Lemmas~\ref{lemma:Kj} and~\ref{lemma:f}, we have 
\begin{align}
 R_4 \geq - C \lambda  \left( \kappa_s + \|O\| e^{Ct} \right)   \left( \cN_+ (s) + 1 \right)
\end{align}
 for all $0\leq s \leq t$.

Next,  we consider the fifth term on the r.h.s.  of \eqref{eq:form-LN} and follow the same strategy as the one leading to \cite[Eq.~(3.8)]{KRS}. With 
\begin{align}
\| v* \left( f_{s;t} \varphi_s \right) \|_\infty \leq C \| f_{s;t}\|_{2}  \| \varphi_s \|_{H^1(\bR^3)} \leq C e^{C t} \| O \|  \label{eq:bounds-4}
\end{align}
for  $0\leq s \leq t$ from \eqref{eq:estimate-phi-H1} and \eqref{eq:estimate-f-H2} we find that 
\begin{align}
&e^{-\lambda \kappa_{s} \cN_+ (s) / 2}  e^{\sqrt{N} \phi_+ (h_{s;t})} \int  v(x-y) \left[\varphi_s (x)  a^*_y a_x b_y + \overline{\varphi_s(x)} b_y^* a_x^* a_y \right]  dxdy e^{-\sqrt{N} \phi_+ (h_{s;t})} e^{\lambda \kappa_{s} \cN_+ (s) / 2} \notag\\ & = S_5 + T_5 + i R_5  \label{eq:LN-5}
\end{align} 
where $S_5^* = S_5$ is symmetric, $\| {T}_5 \|  \leq  C e^{Ct} N \| O \|^3 \lambda^3$ and 
\begin{align}
i R_5 &= - \frac{\lambda \kappa_{s}}{2\sqrt{N}} \int v(x-y) \, \left[\ph_s (y) b_x^* a_y^* a_x - \overline{\ph_s (y)} a_x^* a_y b_x \right] \; dx dy  \notag\\
&\quad -  \int  v(x-y)  \left[ h_{s;t} (y) \ph_s (y) b_x^* b_x - \overline{h_{s;t} (y)} \overline{\ph_s (y)} b_x^* b_x  \right]\; dx dy  \notag\\
&\quad-  \int v(x-y) \left[ h_{s;t} (x) \ph_s (y) b_x^* b_y^* - \overline{h_{s;t} (x)} \overline{\ph_s (y)}  b_y b_x  \right] \; dx dy \,  \notag\\
 &\quad+ \int v(x-y) \left[ \overline{h_{s;t} (x)} \ph_s (y) (1-\cN_+ (s)/ N) a_y^* a_x - h_{s;t} (x) \overline{\ph_s(y)} a_x ^* a_y (1-\cN_+(s) /N) \right] \; dx dy \notag \\
&\quad-  \frac{1}{N}  \int v(x-y) \left[ \ph_s (y) a_x^* a(h_{s;t} ) a_y^* a_x - \overline{\ph_s (y)} a_x^* a_y a^* (h_{s;t}) a_x \right] \; dx dy  \,.
\end{align} 
Using again \eqref{eq:bounds-b} and  \eqref{eq:bounds-4} as well as 
\begin{align}
\| v^2 * \vert \varphi_s \vert^2 \|_\infty \leq C \| \varphi_s \|_{H^1(\bR^3)}^2 \leq C
\end{align}
by \eqref{eq:ass-v}, 
we find  that 
\begin{align} 
 R_5 \geq - C \lambda  \left( \kappa_s + \|O\| e^{Ct} \right)   \left( \cN_+ (s) + 1 \right)
\end{align}
for all $0 \leq s \leq t $.  

Finally, we consider the last term on the r.h.s. of \eqref{eq:form-LN}, proceeding as in \cite[Eq.~(3.9)]{KRS}. With \eqref{eq:bounds-b} and
\begin{align}
\label{eq:f-h2}
\| v^2 * f^2_{s;t}\|_\infty \leq C \|f_{s;t}\|^2_{H^1(\bR^3)} \leq C e^{e^{Ct}} \vertiii{O} 
\end{align}
we find that 
\begin{align}
&\frac{1}{2N}  e^{-\lambda \kappa_{s} \cN_+ (s) / 2} e^{ \sqrt{N} \phi_+ (h_{s;t}) } \int dxdy \,  v(x-y) a^*_x a^*_y a_y a_x   e^{-  \sqrt{N} \phi_+ (h_{s;t})} e^{\lambda \kappa_{s} \cN_+ (s) / 2} \notag \\
 &= S_6 + T_6 + i R_6 \label{eq:LN-6}
 \end{align}  
with
\begin{align}
 i R_6  =\frac{\lambda}{2\sqrt{N}} \int v(x-y) \, \left[ \overline{f_{s;t} (y)}  a_x^* a_x b_y -    f_{s;t} (y) b^*_y a_x^* a_x \right] \; dxdy  \,.
 \end{align}  
Again  $S_6 = S_6^*$ is symmetric and $\| T_6 \| \leq  Ce^{e^{C t }}  N \vertiii{O}^3  \lambda^3$. Furthermore,  with \eqref{eq:f-h2} a Cauchy--Schwarz inequality yields 
\begin{align}
   R_6 \geq - Ce^{e^{Ct} }  \vertiii{O}  \lambda \cN_+ (s)  \,.
\end{align}

If we combine \eqref{eq:LN-1}, \eqref{eq:LN-2}, \eqref{eq:LN-3}, \eqref{eq:LN-4}, \eqref{eq:LN-5} and \eqref{eq:LN-6}, we conclude that the first term on the r.h.s. of \eqref{eq:gen} is given by 
\begin{align}
& e^{-\lambda \kappa_{s} \cN_+ (s) / 2} e^{\sqrt{N} \phi_+ (h_{s;t})} \cL_N (s) e^{-\sqrt{N} \phi_+ (h_{s;t})} e^{ \lambda \kappa_{s} \cN_+ (s) / 2} \notag\\
& = \frac{i \lambda \sqrt{N}}{2}  \phi_- ((h_{\rm H} (s) + K_{1,s} +  K_{2,s} J ) f_{s;t}) + S +T + i R  \label{eq:gen1}
\end{align} 
where $S^* = S$ is symmetric,  $\| T \| \leq    C e^{e^{C t} } N \vertiii{O}^3 \lambda^3$ and
\begin{align}
  R \geq -C \lambda (\vertiii{O} e^{e^{C t}  } +  \kappa_{s}) \left(  \cN_+ (s) +1 \right)
\end{align}
for all $0\leq s\leq t$ and  $0\leq \lambda \kappa_s \leq 1$.  
For the second term of the r.h.s. of \eqref{eq:gen} we find as in \cite[p.~2613]{KRS} using the definition of $f_{s;t}$ in \eqref{eq:def-fst} that 
\begin{align}
 & e^{-\lambda \kappa_{t-s} \cN_+ (s) / 2} \left[ i \partial_s e^{ \sqrt{N} \phi_+ (h_{s;t}) } \right] e^{- \sqrt{N} \phi_+ (h_{s;t}) } e^{\lambda \kappa_{t-s}  \cN_+ (s) /2} \notag\\
 &=  - \frac{i \lambda \sqrt{N}}{2} \,  \phi_- ( i \partial_s f_{s;t}) + \wt{S} + \wt{T}\notag \\
  &=  - \frac{i \lambda \sqrt{N}}{2} \,  \phi_- \left( \left( h_{\rm H} (s) + \widetilde{K}_{1,s} -  \widetilde{K}_{2,s} J  \right)  f_{s;t}\right)  + \wt{S} + \wt{T} \notag\\
    &=  - \frac{i \lambda \sqrt{N}}{2} \,  \phi_- \left( \left( h_{\rm H} (s) + K_{1,s} - K_{2,s} J  \right)  f_{s;t}\right)  + \wt{S} + \wt{T} \label{eq:lastline}
\end{align} 
where $\wt{S} = \wt{S}^*$ is symmetric and $\| \wt{T} \| \leq C e^{e^{Ct}} N \vertiii{O}^3 \lambda^3$.  We remark that the last equality holds as an identity in the sense of a quadratic form on $\cF_{\perp \varphi_s}^{\leq N}$ where the projection $q_s$ acts as the identity. 

With \eqref{eq:gen1} and  \eqref{eq:lastline}, we conclude that 
\begin{align}
  \frac{1}{i} \left[ \cJ_{N,t} (s) - \cJ_{N,t}^* (s) \right]   & \geq   -C e^{e^{C t} } N  \vertiii{O}^3 \lambda^3 - \lambda \left[  C (\vertiii{O} e^{e^{Ct}} +  \kappa_s) + \dot{\kappa}_s \right] \cN_+ (s)   \notag \\  &  \quad - C  \lambda (\vertiii{O} e^{e^{Ct}} +  \kappa_s)  \label{2rhs}
\end{align} 
for all $0\leq s \leq t $ and $0\leq \lambda \kappa_s \leq 1$.  
We shall choose 
$$
\kappa_s = 2 \|O\| e^{C (t-s) } +  \vertiii{O}  e^{e^{C t}} \left( e^{C(t-s)} - 1 \right)  
$$
in which case the second term on the r.h.s.\ of \eqref{2rhs} vanishes. With this choice of $\kappa$, we thus have
 from \eqref{eq:J}
\begin{align}
\partial_s \| \xi_t (s) \|^2  \geq - C e^{e^{C t}}  \left[ N \lambda^3 \vertiii{O}^3  + \lambda \vertiii{O}\right]  \| \xi_t (s) \|^2 
\end{align}  
for suitable $C>0$. With Gronwall's inequality, we arrive at
\begin{align}
 \| \xi_t (t) \|^2 \geq e^{-C ( N \lambda^3 \vertiii{O}^3 + \lambda \vertiii{O}) e^{e^{Ct}} t } \, \| \xi_t (0) \|^2  \,.
\end{align}
This concludes the proof of the lower bound.

As already mentioned at the beginning of the proof, the upper bound follows along the same lines. One simply replaces $\kappa_s$ by $-\kappa_s$ and estimates the various error terms $R_j$ for $2\leq j\leq 6$ from above instead of from below.
\end{proof}

The third step, Lemma \ref{lemma:step3}, is proven similarly to \cite[Lemma 3.3]{KRS}. 

 \begin{lemma}
 \label{lemma:step3}
There exists a constant $C_1 > 0$ such that for all  $t>0$, $0 \leq \kappa \leq C_1 \vertiii{O} e^{e^{C_1} t}$  and  $0\leq \lambda \leq   e^{-e^{C_1  t}} / (C_1 \vertiii{O} )$ 
\begin{align}
\label{eq:lemma-step3-upperbound} 
\ln \left\langle \Omega, e^{\lambda \sqrt{N} \phi_+ (f_{0;t})/2}  e^{\lambda \kappa \cN_+ (0)} e^{\lambda \sqrt{N} \phi_+ (f_{0;t})/2}  \Omega \right\rangle \leq  \frac{\lambda^2 N} 2 \| f_{0;t} \|^2 + C_1 N \lambda^3 \vertiii{O}^3 e^{e^{C_1 t}}   
\end{align}
and  
\begin{align}
\label{eq:lemma-step3-lowerbound} 
\ln \left\langle \Omega, e^{\lambda \sqrt{N} \phi_+ (f_{0;t})/2}  e^{-\lambda \kappa \cN_+ (0)} e^{\lambda \sqrt{N} \phi_+ (f_{0;t})/2}  \Omega \right\rangle \geq  \frac{\lambda^2 N}2  \| f_{0;t} \|^2 -C_1 N \lambda^3 \vertiii{O}^3 e^{e^{C_1 t}} \,.
\end{align}
\end{lemma}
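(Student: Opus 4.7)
The strategy parallels \cite[Lemma 3.3]{KRS}, with the lower bound obtained simply by changing the sign of $\kappa$ in the intermediate exponential. The key observation is that once we embed $\cF_{\perp\varphi_0}^{\leq N}$ into the full bosonic Fock space $\cF$ (so that $\cN_+(0)$ becomes the standard number operator $\cN$), the modified operators $b,b^*$ differ from the canonical $a,a^*$ only by factors $\sqrt{1-\cN/N}$, which act as the identity on the vacuum. The matrix element is therefore essentially a computable coherent-state quantity with manageable $1/N$-corrections.

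To orient the calculation, consider first the canonical case in which $\phi_+$ is replaced by $\phi(f)=a(f)+a^*(f)$. Using $a(f)\Omega=0$ and the Baker--Campbell--Hausdorff formula,
\begin{equation*}
e^{\lambda\sqrt{N}\phi(f_{0;t})/2}\Omega = e^{\lambda^2 N\|f_{0;t}\|_2^2/8}\, e^{\lambda\sqrt{N}a^*(f_{0;t})/2}\Omega,
\end{equation*}
and because $\cN$ acts as multiplication by $n$ on the $n$-particle subspace,
\begin{equation*}
\big\langle \Omega,\, e^{\lambda\sqrt{N}\phi(f_{0;t})/2}\, e^{\pm\lambda\kappa\cN}\, e^{\lambda\sqrt{N}\phi(f_{0;t})/2}\Omega\big\rangle
= \exp\!\left(\tfrac{\lambda^2 N\|f_{0;t}\|_2^2}{4}\bigl(1+e^{\pm\lambda\kappa}\bigr)\right).
\end{equation*}
Taking the logarithm and using $e^{\pm\lambda\kappa}=1\pm\lambda\kappa+O((\lambda\kappa)^2)$, valid under the hypothesis $\lambda\kappa\leq 1$, yields
\begin{equation*}
\tfrac{\lambda^2 N}{2}\|f_{0;t}\|_2^2 \;\pm\; \tfrac{\lambda^3 N\kappa\|f_{0;t}\|_2^2}{4}+O(\lambda^4 N\kappa^2\|f_{0;t}\|_2^2).
\end{equation*}
The error is bounded by $C N \lambda^3 \vertiii{O}^3 e^{e^{Ct}}$ upon inserting $\|f_{0;t}\|_2^2\leq \vertiii{O}^2 e^{Ct}$ from Lemma~\ref{lemma:f} together with the hypothesis $\kappa\leq C_1\vertiii{O}\,e^{e^{C_1 t}}$. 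Crucially, the upper and lower bounds emerge from the \emph{same} identity, with only the sign of $\lambda\kappa$ differing, so no separate argument is required.

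To pass from the canonical to the truncated setting, I would invoke the normal-ordering identities of \cite[Propositions~2.2--2.5]{KRS} to obtain explicit expressions for $e^{\pm\lambda\sqrt{N}\phi_+(f_{0;t})/2}$ acting on $\Omega$, reproducing the canonical formulas above together with remainder operators carrying factors of $\cN/N$. Since the number expectation on the coherent-like state $e^{\lambda\sqrt{N}a^*(f_{0;t})/2}\Omega$ is of order $\lambda^2 N\|f_{0;t}\|_2^2$, each $\cN/N$ correction contributes at most a factor of order $\lambda^2\|f_{0;t}\|_2^2$, which is absorbed into the cubic error $\lambda^3 N\vertiii{O}^3 e^{e^{Ct}}$. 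The principal technical difficulty lies in the careful bookkeeping required to verify that the $1/N$-corrections neither alter the leading Gaussian coefficient $\tfrac{\lambda^2 N}{2}\|f_{0;t}\|_2^2$ nor accumulate through the nested exponentials; this is guaranteed by the smallness hypothesis $\lambda\leq e^{-e^{C_1 t}}/(C_1\vertiii{O})$ together with the KRS identities, which explicitly separate leading from subleading contributions and handle both choices of sign symmetrically.
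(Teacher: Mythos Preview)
Your canonical coherent-state computation is correct and indeed matches the explicit evaluation the paper performs at the end of its proof. The gap lies in the passage from $\phi_+$ to the canonical setting. The KRS Propositions 2.2--2.5 are \emph{conjugation} identities (formulas for $e^{\alpha\phi_+}A\,e^{-\alpha\phi_+}$ and the like), not expressions for $e^{\alpha\phi_+}\Omega$; invoking them does not produce a usable formula for the vector. Since $[b(f),b^*(f)]=(1-\cN_+/N)\|f\|^2-N^{-1}a^*(f)a(f)$ is not central, no Baker--Campbell--Hausdorff step reduces $e^{\sqrt{N}\phi_+(h)}$ to a scalar times a normal-ordered exponential, and the ``$\cN/N$ corrections'' you invoke sit inside nested exponentials where a naive first-order bound is insufficient. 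The paper therefore does not compare $\phi_+$ with $\phi$ at all; it interpolates via a family $\xi_s$ between the desired matrix element ($s=1$) and an explicitly computable normal-ordered quantity $e^{\sqrt{N}b^*(h_t)}\Omega$ ($s=0$), controlling $\partial_s\|\xi_s\|^2$ by Gronwall.

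More importantly, your claim that ``no separate argument is required'' for the lower bound is exactly where the proposal fails. In the interpolation used for the upper bound in \cite{KRS}, the hermitian part of $\cG_s$ contains, at order $\lambda^2$, the combination $-(1-s)\big[\|h_t\|^2\cN_+(0)+a^*(h_t)a(h_t)\big]$; being nonpositive, it is harmless when bounding $\partial_s\|\xi_s\|^2$ from above. Merely flipping the sign of $\kappa$ does not flip this term, so for a lower bound it has the wrong sign. The paper's remedy is to insert an extra factor $e^{-(1-s)^2\cN_+(0)\|h_t\|^2}$ into the interpolation vector; its $s$-derivative supplies $+2(1-s)\cN_+(0)\|h_t\|^2$, and the operator inequality $\cN_+(0)\|h_t\|^2\geq a^*(h_t)a(h_t)$ then makes the residual term $(1-s)[\cN_+(0)\|h_t\|^2-a^*(h_t)a(h_t)]$ nonnegative. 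This modification also shifts $\kappa$ to $\kappa'=\kappa+\lambda\|f_{0;t}\|_2^2/2$ in the explicit endpoint computation. This positivity trick is the new ingredient for the lower bound and is not captured by your symmetry heuristic.
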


\begin{proof} We start with the lower bound \eqref{eq:lemma-step3-lowerbound}, we proceed similarly as in the proof of the previous Lemmas. Following \cite[Lemma 3.3]{KRS}, we define for $s \in [0,1]$ the vector
\begin{align}\label{lef}
\xi_s =  e^{(1-s)^2 (N - 2 \cN_+(0)) \| h_t \|^2 /2} e^{-\lambda \kappa \cN_+ (0) /2} e^{s \sqrt{N} \phi_+ (h_t)} e^{(1-s) \sqrt{N} b^* (h_t)} e^{(1-s) \sqrt{N} b(h_t)}\Omega
\end{align}
where we introduced the notation $h_t = \lambda f_{0;t}/2 \in L^2_{\perp \ph} (\bR^3)$. Note that the last exponential factor in \eqref{lef} could be omitted since $b(h_t) \Omega = 0$, but it is actually convenient to keep it for the calculation of the derivative of $\partial_s \| \xi_s\|^2$. Compared to the upper bound in \cite{KRS}, we need the additional term $e^{-(1-s)^2\cN_+(0)\|h_t\|^2}$  in \eqref{lef}, as will be seen below.
We have 
\begin{align}
\label{eq:xi-31}
 \| \xi_1 \|^2 =  \left\langle \Omega, e^{\lambda \sqrt{N} \phi_+ (f_{0;t})/2}  e^{-\lambda \kappa \cN_+ (0)} e^{\lambda \sqrt{N} \phi_+ (f_{0;t})/2}  \Omega \right\rangle
\end{align}
and 
\begin{align}
\label{eq:xi-32} 
\| \xi_0 \|^2 = e^{N \| h_t \|^2} \langle e^{\sqrt{N} b^* (h_t)} \Omega , e^{- (\lambda \kappa + 2 \|h_t\|^2)  \cN_+ (0)} e^{\sqrt{N} b^* (h_t)} \Omega \rangle .
\end{align}
The latter quantity will lead to the desired bound on the r.h.s.\ of \eqref{eq:lemma-step3-lowerbound}.  In order to compare \eqref{eq:xi-31} and \eqref{eq:xi-32}, we compute the derivative of $\xi_s$  as
\begin{align}
 \partial_s \| \xi_s \|^2 = 2\, \Re\, \langle \xi_s , \cG_s \xi_s \rangle \end{align}
 where, following \cite[Eq.~(3.12) et seq.]{KRS}, 
 \begin{align}
 \cG_s & = 2 (1-s) \cN_+(0) \|h_t\|^2 \notag \\ 
 &\quad   - e^{-\lambda \kappa \cN_+ (0) /2} e^{s \sqrt{N} \phi_+ (h_t)} \left[ (1-s) \| h_t \|^2 \cN_+ (0) + (1-s) a^* (h_t) a(h_t) - \sqrt{N} \| h_t \|^2 (1-s)^2 b^* (h_t) \right] \notag\\
  &\hspace{4cm} \times e^{-s \sqrt{N} \phi_+ (h_t)} e^{\lambda \kappa_t \cN_+ (0) /2} \; .
 \end{align}
 Using that $ \| h_t \|_2 \leq \lambda \|O\| e^{C t }/2$
by Lemma \ref{lemma:f}, it follows from the calculation \cite[Eq.~(3.12)  et seq.]{KRS} that 
\begin{equation}
\cG_s  =  (1-s) \cN_+(0) \|h_t\|^2  - (1-s) a^*(h_s) a(h_s) + T
\end{equation}
with $\|T\| \leq C N \lambda^3 \|O\|^3 e^{C t}$ as long as $\lambda \kappa \leq 1$. Since $\cN_+(0)  \| h_t\|^2 \geq a^*(h_t) a(h_t)$, the remaining terms are positive, hence
\begin{align}
 \partial_s \| \xi_s \|^2  \geq - CN \lambda^3 \|O\|^3 e^{C t} \|\xi_s\|^2 \,.
  \end{align}
With Gronwall's inequality we arrive at 
\begin{align}
 \| \xi_1 \|^2 \geq e^{-CN\lambda^3 \|O\|^3 e^{Ct}}   \| \xi_0 \|^2\,.
\end{align} 

It remains to compute \eqref{eq:xi-32}.  To this end, let us introduce $\kappa' = \kappa + 2 \|h_t\|^2 /\lambda = \kappa + \lambda\|f_{0;t}\|^2/2$. As in \cite[Lemma 3.3]{KRS} we compute
\begin{align}
& e^{N \lambda^2 \| f_{0;t} \|^2/4}  \langle e^{\sqrt{N} \lambda b^* (f_{0;t})/2} \Omega , e^{-\lambda \kappa' \cN_+ (0) }e^{\sqrt{N} \lambda b^* (f_{0;t})/2} \Omega \rangle\notag\\
 &= e^{N \lambda^2 \| f_{0;t} \|^2/4} \sum_{n=0}^N \frac{N^n \lambda^{2n}}{4^{n} (n!)^2} e^{-\lambda \kappa' n} \| b^* (f_{0;t})^n \Omega \|^2 \; 
\end{align} 
and furthermore
\begin{align}
 & \| b^* (f_{0;t})^n \Omega \|^2 \notag
 \\ &= \left\| a^* (f_{0;t}) (1- \cN_+ (0) /N)^{1/2} a^* (f_{0;t}) (1-\cN_+ (0)/N)^{1/2} \cdots a^* (f_{0;t}) (1-\cN_+ (0)  /N)^{1/2} \Omega \right\|^2 \notag \\ 
 &= \frac{(N - (n-1)) \cdots (N-1)}{N^{(n-1)}} \| a^* (f_{0;t})^n \Omega \|^2 =  \frac{(N-1)! n! }{N^{(n-1)} (N-n)!}  \| f_{0;t} \|_2^{2n} \; .
\end{align}
Thus we have 
\begin{align}
e^{N \lambda^2 \| f_{0;t} \|^2/4}  & \langle e^{\sqrt{N}\lambda b^* (f_{0;t})/2} \Omega , e^{-\lambda \kappa' \cN_+ (0) }e^{\sqrt{N} \lambda b^* (f_{0;t})/2} \Omega \rangle \notag\\
&= e^{N \lambda^2 \| f_{0;t} \|_2^2/4} \sum_{n=0}^N {N \choose n} \frac{\lambda^{2n} \| f_{0;t} \|_2^{n} }{4^{n}} e^{-\lambda \kappa' n}  \notag\\
  &= e^{N \lambda^2 \| f_{0;t} \|_2^2/4} \, \left(1 + \frac{\lambda^{2} \| f_{0;t} \|_2^2}{4} e^{-\lambda \kappa'} \right)^N \notag\\
  &= e^{N \left( \lambda^2 \| f_{0;t} \|_2^2/4 + \ln \left( 1 +  \lambda^2 \| f_{0;t} \|_2^2 e^{-\lambda \kappa'} /4 \right) \right) }  \notag\\
  &\geq  e^{N \left( \lambda^2 \| f_{0;t} \|_2^2 ( 1 + e^{-\lambda \kappa'}) \right)/4 - N \lambda^4 \| f_{0;t} \|_2^4  / 32} 
\end{align}
where we used that $\ln(1+x) \geq x - x^2/2$ for $x\geq 0$. Using in addition that $e^{-\lambda\kappa'} \geq 1- \lambda \kappa'$ and $\|f_{0;t}\|_2 \leq \|O\| e^{Ct}$, we arrive at the desired bound \eqref{eq:lemma-step3-lowerbound}. 

The upper bound \eqref{eq:lemma-step3-upperbound} follows in essentially the same way, see \cite[Lemma 3.3]{KRS}. 
\end{proof}

\begin{proof}[Proof of Theorem \ref{thm:main}]
The upper bound \eqref{eq:thm-upper-bound} is an immediate consequence of \eqref{eq:start}, the upper bound in \eqref{eq:lemma-step1-upper}, \eqref{eq:lemma-step2-upper} and \eqref{eq:lemma-step3-upperbound}. Similarly, the lower bound \eqref{eq:thm-lower-bound} follows by combining \eqref{eq:start} with the lower bound in \eqref{eq:lemma-step1-upper}, \eqref{eq:lemma-step2-lower} and \eqref{eq:lemma-step3-lowerbound}. 
 \end{proof}

\subsection*{Acknowledgments} The authors thank G\'erard Ben Arous for pointing out the question of a lower bound.  Funding from the European Union's Horizon 2020 research and innovation programme under the ERC grant agreement No. 694227 (R.S.) and under the Marie Sk\l{}odowska-Curie  Grant Agreement No. 754411 (S.R.) is gratefully acknowledged.


\end{document}